\def\th@plain{%
  \thm@notefont{}
  \itshape 
}
\def\th@definition{%
  \thm@notefont{}
  \normalfont 
}
\newtheorem{theorem}{Theorem}
\newtheorem{remark}{Remark}
\newtheorem{lemma}{Lemma}
\newcommand*{\Perm}[2]{{}^{#1}\!P_{#2}}%
\begin{document}

\title{Intelligent Reflecting Surface Networks with Multi-Order-Reflection Effect: System Modelling and Critical Bounds}

\author{
Yihong Liu,~\IEEEmembership{Graduate Student Member,~IEEE},
Lei Zhang,~\IEEEmembership{Senior Member,~IEEE}, 
Feifei Gao,~\IEEEmembership{Fellow,~IEEE}\\
and Muhammad Ali Imran,~\IEEEmembership{Senior Member,~IEEE},
\thanks{ Yihong Liu, Lei Zhang and Muhammad Ali Imran are with the James Watt School of Engineering, University of
Glasgow, Glasgow, G12 8QQ, UK. (e-mail: y.liu.6@research.gla.ac.uk;
lei.zhang@glasgow.ac.uk; muhammad.imran@glasgow.ac.uk).

Feifei Gao is with the Department of
Automation, Tsinghua University, Beijing, China (e-mail: feifeigao@ieee.org).

This paper has been presented in part at the IEEE International Conference on Electronic Information and Communication Technology (ICEICT) 2021 \cite{myconfer2}.}

}

\maketitle
\begin{abstract}
In this paper, we model, analyze and optimize the multi-user and multi-order-reflection (MUMOR) intelligent reflecting surface (IRS) networks. We first derive a complete MUMOR IRS network model that is applicable for arbitrary times of reflections, arbitrary size and number of IRSs/reflectors. The optimal condition for achieving sum-rate upper bound with one IRS in a closed-form function and the analytical condition to achieve interference-free transmission, are derived respectively. Leveraging this optimal condition, we obtain the MUMOR sum-rate upper bound of IRS network with different network topology, where the linear graph (LG), complete graph (CG) and null graph (NG) topologies are considered. Simulation results verify our theories and derivations and demonstrate that the sum-rate upper bounds of different network topologies is under a $K$-fold improvement given $K$-piece IRS.




\end{abstract}


\begin{IEEEkeywords}
Intelligent reflecting surfaces networks, beamforming, MIMO, multi-order-reflection, sum-rate, graph theory.
\end{IEEEkeywords}

\IEEEpeerreviewmaketitle


%
\IEEEpeerreviewmaketitle

\section{Introduction}
The \nth{5} generation (5G) communication is supported by various radio and network techniques such as millimeter wave (mmWave), ultra-dense network, and massive multiple-input multiple-output (MIMO) to achieve unrivaled data rate, ultra-reliability, ultra-low latency communications, and satisfy the ever-increasing demands from various applications \cite{2017shafi5g}. Nevertheless, researchers have begun to seek the pathway towards the future \nth{6} generation (6G) communication, for obtaining even higher spectral efficiency (SE) and energy efficiency (EE).

The intelligent reflecting surface (IRS) \cite{gong2020survey}, also named as reconfigurable intelligent reflecting surface (RIS) \cite{yuan2021reconfigurable,tang2021pathloss} or metasurface \cite{yang2016,Zhao2020}, has been proposed as a potential 6G technique. The initial idea of IRS is originated from creating a smart and programmable wireless channel with a class of artificial surfaces. It can be produced by integrating artificially designed electronic elements, e.g., PIN diodes or varactors, on the facet of surfaces, e.g., printed circuit board (PCB), plus corresponding processors and controllers \cite{dai2019Relization}. The processor can compute the controlling parameters for reconfiguring each element based on different design criteria. The controller, e.g., field programmable gate array (FPGA), can correspondingly reconfigure the statement of each element \cite{zhangnature}. Then, the phase and amplitude of the reflected electromagnetic (EM) wave impinging on IRS can be manipulated correspondingly with designed manners. In this way, IRS is able to realize passive beamforming between transmitters (\texttt{Txs}) and receivers (\texttt{\texttt{Rxs}}) by reflecting the signal towards the desired \texttt{Rxs}, which essentially collect extra transmitted power from \texttt{Txs} to \texttt{Rxs}. Therefore, EE can be improved by using IRS to increase the signal-to-noise ratio (SNR) \cite{huang2018ee,zihan2020}. Meanwhile, IRS can suppress the inter-user interference by adding the interference power destructively at \texttt{Rxs} \cite{Liu2019}. From this view, more transceivers can share the same frequency bandwidth to achieve better SE \cite{Jiao2021}. In addition, compared with base stations (BSs) or active relaying (AF), IRS has a significantly lower cost because it does not involve any energy starving components like RF chains \cite{Wu2020}. 
A single IRS assisted communication systems have been considered in many works from different aspects, including EE maximization and weighted sum-rate maximization \cite{Wu2019,xie2022gao,Zhou2020}. An IRS network, which is defined as deploying multi-piece IRS in the transmission environment, has been studied to further enhance the EE and SE. In \cite{Kishk2021}, the statistical path-loss model of a large-scale IRS network is derived. The throughput of a single user (SU) has been maximized by IRS network leveraging the supervised learning approach \cite{Alexandropoulos2020}. Multi-user (MU) transmission via IRS network is investigated, considering minimizing the power consumption of transmit beamforming with constraints of the power supply, signal to average interference plus noise ratio (SINR) of each \texttt{Rx}, and constant modulus \cite{Sun2020}. The authors of \cite{mei2021mush} derived the lower bound of the MU average SINR by considering rayleigh fading channel in the IRS network \cite{mei2021mush}. The wideband transmission of MU has further been designed to maximize the sum-rate with limited power and constant modulus constraints in the IRS network \cite{zijian2021}. To realize decentralized IRS network, the authors of \cite{shaocheng2020} proposed distributed scheme of IRS network to maximize the MU weighted sum-rate. Additionally, the IRS network has been proposed to realize robust, secure MU communication by jointly designing the transmit beamforming, artificial noise and IRS network \cite{Yu2020}. Considering the multi-order-reflection (MOR) \cite{conway1994measurement,Maltsev2009,tam1995raytrace}, the authors of \cite{Mei2021} analyzed the single user multi-order-reflection (SUMOR) transmission in one path of the IRS network and then provided the beam routing solution. Further, the authors gave a tutorial for optimizing the wireless channel of one reflection to multi-user and multi-order-reflection (MUMOR) transmission \cite{mei2021intelligent}.

However, an intrinsic nature of EM wave transmission in IRS has been overlooked in the literature for a long time, i.e, the dual reflection of MOR signal between two reflectors. The dual reflection is a common phenomenon for reflectors having spatial correlation and has been widely considered in radar system \cite{Griesser1989,modi2019bounce,hua2019mr}. Specifically, the dual reflection happens between two reflectors in placement with dihedral angle such that beam lobes of two reflectors can point towards each other. Here, we exemplify the dual reflection between a pair of IRS in an indoor transmission scheme, as shown in Fig.~\ref{fig:newdemo}. 

Path \textbf{A} is the blocked path between a transceiver pair thus leveraging IRS is necessary. Paths \textbf{B} and \textbf{D} are line of sight (LoS) paths between \texttt{Tx} to an IRS, paths \textbf{E} and \textbf{F} are that of from one IRS to another IRS, and paths \textbf{C} and \textbf{G} are that of from one IRS to \texttt{Rx}. Then \texttt{Rx} can receive first-order-reflection (FOR) signal from a cascaded line of sight (C-LoS) path \textbf{B}-\textbf{C} and second-order-reflection (SOR) signal from a C-LoS path \textbf{D}-\textbf{F}-\textbf{G}. In addition, there are paths caused by dual reflection and without loss of generality, we introduce the dual reflection between IRS$_1$ and IRS$_2$. 
As an LoS path \textbf{E} exist between IRS$_1$ and IRS$_2$, the signal components impinging on IRS$_1$ and IRS$_2$ can be reflected towards each other due to side lobes. In particular, IRS$_2$ can receive the FOR signal from a C-LoS path \textbf{B}-\textbf{E}. Meanwhile, the second-order-reflection (SOR) signal via a C-LoS path \textbf{D}-\textbf{F}-\textbf{E} can be received by IRS$_1$ as well. In this case, the dual reflections between IRS$_1$ and IRS$_2$ are introduced. Immediately, the \texttt{Rx} further receive the SOR signal passing through a C-LoS path \textbf{B}-\textbf{E}-\textbf{G} and third-order-refection signal along another C-LoS path \textbf{D}-\textbf{F}-\textbf{E}-\textbf{C}. Due to the dual reflection will still occur, higher-order reflection signals are successively produced by repetitive signal reflections between IRS$_1$ and IRS$_2$ (for example, C-LoS paths \textbf{B}-\textbf{E}-\textbf{E}-\textbf{C} and \textbf{B}-\textbf{E}-\textbf{E}-\textbf{E}-\textbf{G}). As a result, some signal components keep continuous reflecting between the dual IRS pair, while other parts can either reach \texttt{Rx} or dissipate in trivial directions. Note that, signal components from higher-order reflections should not be neglected as long as they are not overwhelmed by \texttt{Rx}'s  noise power, or potential destruction of signal amplitude, fatal phase distortion and inter-symbol interference can significantly undermine the overall system performance. Thus, the dual reflection should be well considered in a complete signal model of IRS networks.

However, we notice most works about IRS only consider FOR. Though works \cite{Mei2021,mei2021intelligent} further consider C-LoS paths in MOR, the signal component via dual reflection is omitted in their models. To the best of the authors' knowledge, two main issues remained unsolved. First, no IRS works completely considered a complete channel model in the reflective environment. Thus, an establishment of the complete model for IRS network is necessary for analyzing generic and arbitrary reflecting scenarios. Note that, it is the most critical prerequisite to lay a foundation of a precise, robust, and reliable design for IRS network. Further, no analytical works have indicated clear bounds to guide the deployment of IRS networks with multi-user interference, i.e., how much EE and SE can be respectively improved, and where is the sum-rate upper bound and how to reach the upper bound. 

 \begin{figure}
\hspace*{0mm}
    \centering
    \adjincludegraphics[width=85mm,height=50mm,center]{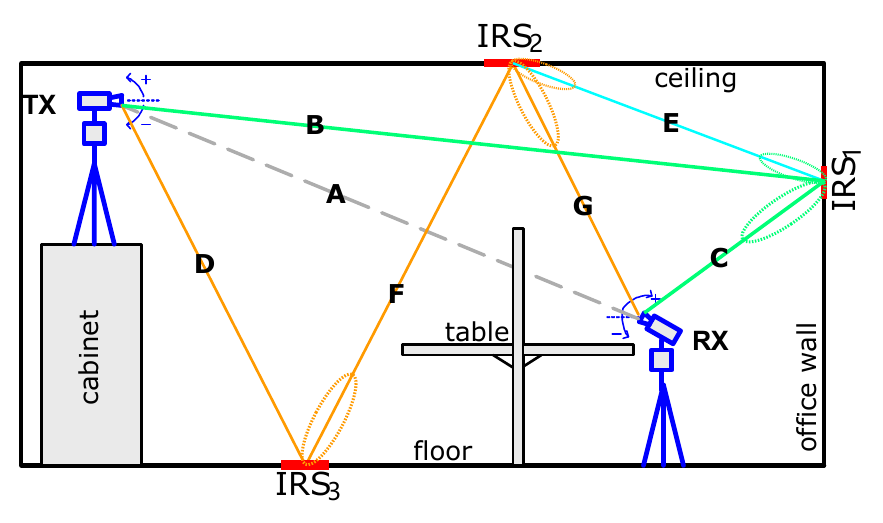}
    \caption{An example of indoor transmission assisted by IRS network with the same furniture setting shown in \cite{Maltsev2009}}
    \label{fig:newdemo}
\end{figure}

By addressing the above issues, the main contributions of this paper are listed as follows. 
 \begin{itemize}
     \item To incorporate the MOR effect with dual reflection, we introduce an index matrix to derive a complete model of IRS network, which is applicable for arbitrary orders of reflections, arbitrary number of IRS and arbitrary topologies of IRS network.  
     \item We mathematically derive two critical conditions: the optimal condition to reach the sum-rate upper bound and condition to realize interference-free transmission as insights for studying the EE and SE of IRS network.
    \item Considering different topologies of the IRS network, we analyze the sum-rate upper bound of MUMOR transmission assisted by an IRS network, by employing the optimal condition we derive and graph decomposition to realize the maximized EE and SE.
    
\end{itemize}

The rest of the paper is organized as follows. Section II derives two fundamental models of the MUMOR IRS network. In Section III, the MUMOR IRS network channel model is derived by permutationally combining fundamental models in Section II. Section IV derives the optimal condition and the interference-free transmission's condition. Section V obtains the sum-rate upper bound of the MUMOR IRS network in different network topologies. Simulations and conclusions are given in Section VI and Section VII, respectively. Proof of the optimal condition is given in Appendix A and proof of interference-free transmission's condition is shown in Appendix B.

\textit{Notations:} Throughout this paper, bold-faced upper case letters, bold-faced lower case letters, and light-faced lower case letters are used to denote matrices, column vectors, and scalar quantities, respectively. $\angle$ is the phase of a complex variable. The superscripts $(\cdot)^T$ and $(\cdot)^H$ represent matrix (vector) transpose, complex conjugate transpose, respectively. $\odot$ denotes point-wise multiplication. $\mathbf{I}$ is the identity matrix. The number of $Y$-combinations from a set $S$ of $X$ elements is denoted by ${X \choose Y}$, $\Perm{X}{Y}$ means the number of $Y$-permutations from a set $S$ of $X$ elements. $diag(\cdot)$ is the symbol for vectoring a matrix by taking its diagonal terms.


\section{Fundamental IRS Models }\label{sec:1}
In this section, two fundamental models in IRS networks are presented. We consider LoS channels obey the quasi-optical transmission nature of EM carrier following works \cite{tam1995raytrace,Cao2020,Bjornson2020,Perovic2019,tang2021pathloss,Maltsev2009}. Meanwhile, the NLoS channel between transceivers are considered, as no LoS paths between transceivers could be a common and a pressing issue \cite{bai2015los}, as shown in Fig.~\ref{fig:newdemo}. In addition, we assume each transceiver and IRS is located in a far-field as did in the literature \cite{Wu2019,xie2022gao,Zhou2020,Alexandropoulos2020,Sun2020,shaocheng2020,zijian2021,Kishk2021,Yu2020,mei2021mush,Mei2021,mei2021intelligent}. 
\subsection{The Single IRS Channel Model}
 As shown in Fig.~\ref{fig:demo}, we consider \textit{N} pairs of transceivers where each \texttt{Tx} or \texttt{Rx} is equipped with a single antenna, \textit{M} elements in ULA\footnote{Although ULA is adopted, the proposed IRS framework can be generalized to URA \cite{liu2022multi} or any other geometry.} 
 for each IRS piece and LoS channels between \texttt{Txs}/\texttt{Rxs} and IRS. Denote $\mathbf{A}_{in} \in \mathbb{C}^{M\times N}$ and $\mathbf{A}_{out} \in \mathbb{C}^{M\times N}$ as the LoS channel matrix of angle of arrivals (AOA) and angle of departures (AOD) form \texttt{Txs} to IRS and IRS to \texttt{Rxs}, respectively. Then, we have
\begin{equation} \label{eq:Amatrixin}
\mathbf{A}_{in}=[\mathbf{a}(\phi_{in,1}),\mathbf{a}(\phi_{in,2}),\dots,\mathbf{a}(\phi_{in,N})]\;
\end{equation}
and
\begin{equation} \label{eq:Amatrixout}
\mathbf{A}_{out}=[\mathbf{a}(\phi_{out,1}),\mathbf{a}(\phi_{out,2}),\dots,\mathbf{a}(\phi_{out,N})]\;,
\end{equation}
\noindent where $\mathbf{a}(\phi_{in,i})$ and $\mathbf{a}(\phi_{out,i})$ are steering vectors of incident directions $\phi_{in,i}$ and exit directions $\phi_{out,i}$ from \texttt{Tx}$_i$ to the IRS and IRS to \texttt{Rx}$_i$, respectively.
  \begin{figure}
\hspace*{0mm}
    \centering
    \adjincludegraphics[width=80mm,height=50mm,center]{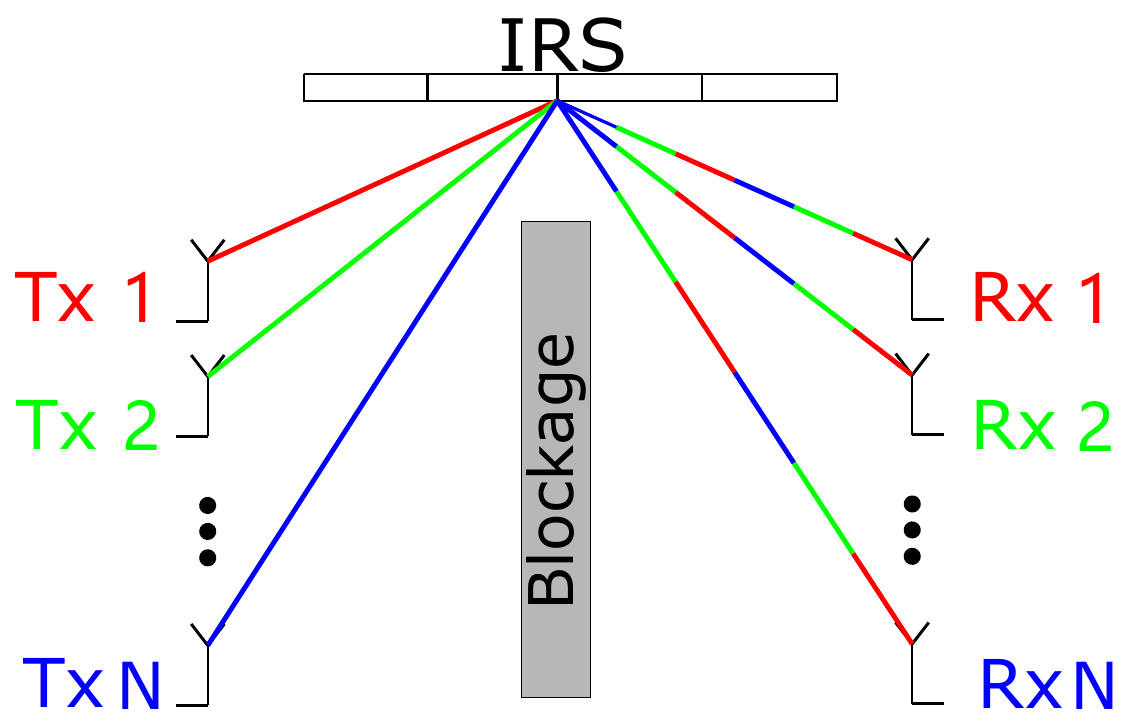}
    \caption{A single IRS model for MU transmission. Different colors mark the signal transmission path from different \texttt{Txs}.}
    \label{fig:demo}
\end{figure}
The IRS weights matrix $\mathbf{W} \in \mathbb{C}^{M\times M}$ is a diagonal matrix with each entity on the diagonal being the weight value. The received signal for all \texttt{Rxs} can be rewritten as 
\begin{equation}\label{eq:gener}
    {\mathbf{y}}=\mathbf{A}_{out}^T\mathbf{W}\mathbf{A}_{in}\mathbf{s}+\mathbf{n}\;,
\end{equation}
\noindent where $\mathbf{s}=[s_1,s_2,\dots,s_{N}]^T \in \mathbb{C}^{N \times 1}$ is the source signal vector from all \texttt{Txs}. In addition, $\mathbf{n}$ is the noise vector at the \texttt{Rxs}. The received signal of \texttt{Rx}$_{i}$ in Eq. (\ref{eq:gener}) can be rewritten as \cite{liu2022multi}
\begin{equation}\label{eq:gene_ith}
    {y}_{i}= \mathbf{w}^H\mathbf{A}_{C,i}\mathbf{s} + n_{i}\,, i=1, 2, ..., N\;\;,
\end{equation}
where $\mathbf{w}$ is a column vector whose elements are the main diagonal elements of $\mathbf{W}$. Meanwhile, $n_{i}$ is the noise at \texttt{Rx}$_{i}$. The $i$-th combined steering vector $\mathbf{A}_{C,i}$ can be written as 
\begin{equation}\label{eq:stemati}
   \mathbf{A}_{C,i}= [\mathbf{a}_{C}(\phi_{out,i}\,,\phi_{in,1}),\dots, \mathbf{a}_{C}(\phi_{out,i}\,,\phi_{in,N})]\in \mathbb{C}^{M\times N}\;,
\end{equation}
where 
\begin{multline} \label{eq:endest-1}
\mathbf{a}_{C}(\phi_{out,v}\,,\phi_{in,u}) ={l}_{IRS}\mathbf{a}(\phi_{out,v})\odot\mathbf{a}(\phi_{in,u}),\\ u,v=1,...\,,N, \
 \end{multline}
 and
 \begin{equation} \label{eq:endest-11}
 \mathbf{a}(\phi) =[1,e^{-jkd\cos\phi},\dots,\\
 e^{-jkd\cos\phi(M-1)}]^T\;.
 \end{equation}
Here, ${l}_{IRS}$ is the path-loss factor for LoS path and its specific expression has been given in \cite{tang2021pathloss,Ozdogan2020}. Without loss of generality, we assume the path-loss factor is a constant.

\subsection{Channel Model Between Two IRSs}
In this subsection, we derive the LoS channel model between one IRS to another as it is fundamental to make up a part of the complete model of IRS network. 

\begin{lemma}\label{lemma:tworis}
The channel matrix between any two IRSs is \textbf{rank-one} and can be written as \begin{equation}\label{eq:bigE}
\mathbf{E}=\mathbf{a}(\phi_{in})\mathbf{a}(\phi_{out})^T,
\end{equation}
where $\phi_{out}$ is the AOD of signal leave from the first IRS towards the next IRS, and $\phi_{in}$ is the AOA of signal arriving at the next IRS.
\end{lemma}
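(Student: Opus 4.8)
The plan is to derive the $(m,n)$ entry of $\mathbf{E}$ directly from the propagation geometry and to show that it separates into a factor depending only on the arrival index and a factor depending only on the departure index; this separability is exactly the statement that $\mathbf{E}$ is an outer product, after which rank-one is immediate. Concretely, I would index the elements of the first IRS by $n=1,\dots,M$ and those of the second IRS by $m=1,\dots,M$, and let $E_{mn}$ denote the LoS channel coefficient from element $n$ of the first IRS to element $m$ of the second. The goal is then to establish that $E_{mn}=[\mathbf{a}(\phi_{in})]_m\,[\mathbf{a}(\phi_{out})]_n$.

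First I would invoke the far-field, quasi-optical LoS assumption adopted throughout Section~\ref{sec:1}: because the two IRSs are mutually in the far field, the wavefront that leaves the first array and impinges on the second is, to leading order, planar, so every element pair shares a common departure direction $\phi_{out}$ and a common arrival direction $\phi_{in}$. Writing the total propagation phase from element $n$ to element $m$ as a reference path length plus the per-element displacements of a ULA, the phase decomposes into a constant reference term, a departure contribution $e^{-jkd(n-1)\cos\phi_{out}}$ that depends only on $n$, and an arrival contribution $e^{-jkd(m-1)\cos\phi_{in}}$ that depends only on $m$. Absorbing the common path-loss/reference factor $l_{IRS}$ (assumed constant) into the overall scaling, this yields $E_{mn}=e^{-jkd(m-1)\cos\phi_{in}}\,e^{-jkd(n-1)\cos\phi_{out}}$.

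Comparing with the ULA steering vector in Eq.~(\ref{eq:endest-11}), whose $m$-th entry is $e^{-jkd(m-1)\cos\phi}$, I would identify the arrival factor as $[\mathbf{a}(\phi_{in})]_m$ and the departure factor as $[\mathbf{a}(\phi_{out})]_n$, so that $E_{mn}=[\mathbf{a}(\phi_{in})]_m\,[\mathbf{a}(\phi_{out})]_n$, i.e. $\mathbf{E}=\mathbf{a}(\phi_{in})\mathbf{a}(\phi_{out})^T$. Rank-one then follows at once: every column of $\mathbf{E}$ is a scalar multiple of $\mathbf{a}(\phi_{in})$, and since each steering vector has first entry equal to $1$ it is nonzero, so the column space is exactly the one-dimensional span of $\mathbf{a}(\phi_{in})$ and hence $\mathrm{rank}(\mathbf{E})=1$.

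I expect the main obstacle to be the rigorous justification of the separability in the second step, namely that no cross term coupling $m$ and $n$ survives. This is precisely where the far-field hypothesis does the work: the neglected terms are the wavefront-curvature corrections, which are of higher order in (array aperture)/(inter-IRS distance) and vanish in the plane-wave limit. I would make this explicit by a first-order expansion of the exact pairwise distance and verifying that the bilinear $m$-$n$ term is $O((d/R)^2)$, with $R$ the inter-array range, hence negligible under the stated far-field regime; once this is granted, the outer-product factorization and the rank-one conclusion are purely algebraic.
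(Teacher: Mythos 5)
Your proposal is correct and follows essentially the same route as the paper: both rely on the far-field hypothesis to reduce the exact pairwise element distance to a separable form (a constant reference length plus a term depending only on the departing element index and a term depending only on the arriving element index), from which the outer-product factorization $\mathbf{E}=\mathbf{a}(\phi_{in})\mathbf{a}(\phi_{out})^T$ and the rank-one conclusion follow immediately. The paper reaches the separable distance $D_{ij}=D_{11}-d_{A,i}\cos\delta_{11}-d_{B,j}\cos\varepsilon_{11}$ through explicit trigonometric identities and the approximation $\varepsilon_{ij}\approx\varepsilon_{11}$, which is exactly the first-order plane-wave expansion you propose to carry out.
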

\begin{proof}
  We consider IRS$_A$ and IRS$_B$ have $M_A$, $M_B$ elements with element spacing $d_A$ and $d_B$ respectively. We denote $A_{i}$ and $B_{j}$ are the $i$-th element and $j$-th element on IRS$_A$ and IRS$_B$, $i \in [1,M_A], j \in [1,M_B]$. The relative distance from the $i$-th element on IRS$_A$ to the first element A$_1$ is $d_{A,i}$ and for that of IRS$_B$ is $d_{B,j}$ between the $j$-th element on IRS$_B$ and B$_1$.
  Since now we have two pieces IRS, to distinguish, we denote the azimuth AOD of IRS$_A$ between elements $A_{i}$ and $B_{j}$ as $\delta_{ij}$, and denote the azimuth AOA of IRS$_B$ as $\varepsilon_{ij}$. Then, we denote the distance between element $A_{i}$ on IRS$_A$ and element $B_{j}$ on IRS$_B$ as $D_{ij}$, and $\mu$ is the angle between IRS$_A$ and IRS$_B$, as shown in Fig.~\ref{fig:tworis}. We assume $D_{11}, \delta_{11}$ and $\varepsilon_{11}$ is known, and there is $\mu=\varepsilon_{11}-\delta_{11}$. From the trigonometric relationship, we have
\begin{equation}\label{eq:epsij}
    \varepsilon_{ij}=\tan^{-1}\left(\frac{D_{11}\sin\varepsilon_{11}-d_{A,i}\sin\mu}{D_{11}\cos\varepsilon_{11}-d_{A,i}\cos\mu-d_{B,j}}\right).
\end{equation}
The distance $D_{ij}$ between elements $A_{i}$ and $B_{j}$ can be calculated as
\begin{equation}\label{eq:dij}
    D_{ij}=\frac{D_{11}\sin\varepsilon_{11}-d_{A,i}\sin\mu}{\sin\varepsilon_{ij}}\;.
\end{equation}
Since the far-field condition holds where distance is much greater than the aperture of IRS such that D$_{11}>> Md$, we have $\varepsilon_{ij}\approx \varepsilon_{11}$ and $\delta_{ij}\approx \delta_{11}$ correspondingly. Thus, by substituting $\varepsilon_{ij}$ with $\varepsilon_{11}$ in Eq. (\ref{eq:epsij}), we have
\begin{equation}\label{eq:daidbi}
  d_{A,i}\sin\delta_{11}=-d_{B,i}\sin\varepsilon_{11}.
\end{equation}
Then, we substitute Eq. (\ref{eq:daidbi}) into Eq. (\ref{eq:dij}), we have
\begin{equation}
    D_{ij}=D_{11}-d_{A,i}\cos\delta_{11}-d_{B,j}\cos\varepsilon_{11}.
\end{equation}
Since the LoS channel between IRS$_A$ and IRS$_B$ can be represented by
\begin{equation}
\mathbf{E}_{AB}=e^{-jk\mathbf{D}},
\end{equation}
where $\mathbf{D} \in \mathbb{C}^{M_A \times M_B}$ is the distance matrix derived from $D_{ij}$, $\mathbf{E}_{AB}$ can be rewritten as 
\begin{equation}\label{eq:tworis}
\mathbf{E}_{AB}=e^{-jkD_{11}}(\mathbf{a}(\varepsilon_{11})\mathbf{a}(\delta_{11})^T)^*.
\end{equation}
\noindent Note that the path delay $D_{11}$ is a constant between any two fixed IRSs. Since the path delay is known, it can be removed here. In addition, by taking the inverse element order of IRS$_A$ and IRS$_B$, which is equal to taking conjugate to the steering vectors of AOA and AOD, Eq. (\ref{eq:tworis}) can be rewritten as 
\begin{equation}\label{eq:tworis_conju}
\mathbf{E}_{AB}=\mathbf{a}(\varepsilon_{11})\mathbf{a}(\delta_{11})^T.
\end{equation}
As $\delta_{11}=\phi_{out}$, $\varepsilon_{11}=\phi_{in}$, we can observe that the LoS channel between arbitrary two IRSs can be considered as the out product of two steering vectors, which is a rank one matrix given in Eq. (\ref{eq:bigE}).
\end{proof}
 From the view of Lemma \ref{lemma:tworis}, each IRS can regard another IRS as a point source located in the far-field, but both of them are able to shape a pencil beam towards each other. Additionally, it means that only a single data stream can be supported by a LoS channel between two pieces of IRS. In real applications, ranks can be greater than one due to diffraction and refraction effects of EM wave. However, as the carrier frequency keeps increasing for more spectrum resource, the diffraction and refraction effects becomes weak and vulnerable. Hence multiple streams transmission between two IRSs becomes impractical, where traditional rayleigh fading model is inconsistent in this case \cite{Bjornson2020}. Therefore, in this work, we consider rank-one channel between two IRSs.
 
 \begin{figure}
   \centering
  \includegraphics[width=60mm,height=60mm]{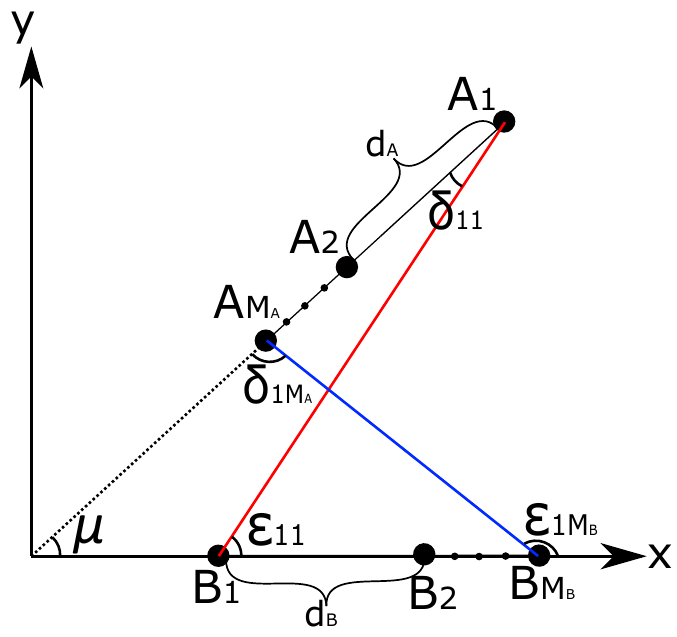}
  \caption{The illustration of channel model between two IRSs.}
  \label{fig:tworis}
\end{figure}
 

\section{IRS network Channel Model}

 \subsection{The MUFOR Network Channel}
 Denote the channel of FOR IRS network as $\mathbf{H}_{I,1}$. Based on Eq. (\ref{eq:gener}), the received signal with $K$ pieces IRS can be expressed as 
\begin{equation}\label{eq:macro_MUMOR}
    {\mathbf{y}}=\mathbf{H}_{I,1}\mathbf{s}+\mathbf{n}\;,
\end{equation}
where 
\begin{equation}\label{eq:FOR-mu-model}
\mathbf{H}_{I,1}=\sum_{k=1}^{K}\mathbf{A}_{out,k}^T\mathbf{W}_k\mathbf{A}_{in,k}\;
\end{equation}
and $\mathbf{A}_{in,k}$ and $\mathbf{A}_{out,k}$ are two steering vector matrices of AOA and AOD with respect to $k$-th IRS. Note that, the $k$-th C-LoS path component is made up via multiplexing only one weights matrix $\mathbf{W}_k$ one time with other two steering vector matrices, $\mathbf{A}_{in,k}$ and $\mathbf{A}_{out,k}$. Thus, the IRS network channel $\mathbf{H}_{I,1}$ embody $K$ different paths and all of these paths only experience one time reflection.
\subsection{The Exemplification of MUMOR Network Channel}
To model the MOR effect analytically, we define the maximum order of reflections which can exist within the IRS network as $\Gamma$. Essentially, $\Gamma$ plays a role of effective cut-off parameter on the MOR effect. Though it is possible to consider $\Gamma \rightarrow \infty$\footnote{It is similar to the LoS path of visible light reflected within two mirrors or more mirrors for infinite times.}, we need to cut off using $\Gamma$ as a finite value because we have path-loss in practical scenarios. In this case, the reflection order of signal components less than or equal to $\Gamma$ is considered, while the signal components with orders higher than $\Gamma$ are assumed to be overwhelmed by the noise power and hence can be neglected.
\begin{figure*}
   \centering
   \adjincludegraphics[width=\textwidth,height=35mm,right]{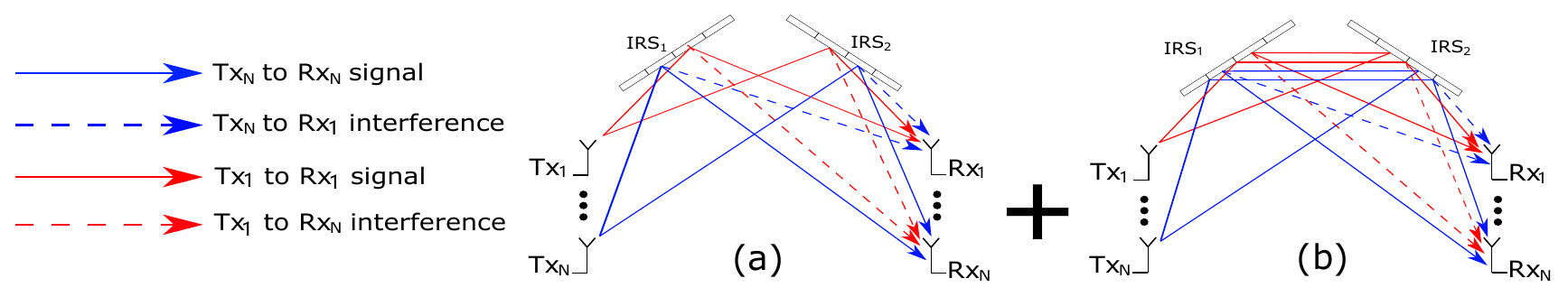}
  \caption{MUMOR transmission within the IRS network given $\Gamma=2$, $K=2$. (a) MU signals passing along the FOR IRS network channel, $\mathbf{H}_{I,1}$. (b) MU signals passing along the SOR IRS network channel, $\mathbf{H}_{I,2}$.}
  \label{fig:nris}
\end{figure*}
To involve arbitrary number of reflections in the IRS network, we denote the IRS network channel in $\gamma$-th order as $\mathbf{H}_{I,\gamma},$ where $\gamma \in [1,\Gamma]$. Then, we extend the IRS network channel in Eq. (\ref{eq:FOR-mu-model}) from FOR to MOR via superposition as  
\begin{equation}\label{eq:p_sum}
        \mathbf{H}_I=\sum_{\gamma=1}^{\Gamma}\mathbf{H}_{I,\gamma}\;,
\end{equation}
where totally $\Gamma$ orders of IRS network channels are added up. The $\gamma$-th order MU channel component $\mathbf{H}_{I,\gamma}$ includes all C-LoS path components that experience $\gamma$ orders in the network, which also means each C-LoS path component in $\mathbf{H}_{I,\gamma}$ is exactly weighted for $\gamma$ times. 

To differentiate MOR from FOR in the IRS network and illustrate the dual reflection, we exemplify by considering two pieces of IRS, where $K=2$ and $\Gamma=2$, as shown in Fig.~\ref{fig:nris}. In this case, each C-LoS path passes maximal $2$ pieces IRS. The C-LoS paths with a reflection order of more than three are ignored. Thus, the received signal of all receivers should consist of MU signals passing along the FOR IRS network channel $\mathbf{H}_{I,1}$ and the SOR IRS network channel $\mathbf{H}_{I,2}$ which has been respectively shown in Fig.~\ref{fig:nris}(a) and (b). Considering the FOR IRS network channel, we can have
\begin{equation}\label{eq:p0}
\mathbf{H}_{I,1}=\mathbf{A}_{out,1}^T\mathbf{W}_{1}\mathbf{A}_{in,1}+\mathbf{A}_{out,2}^T\mathbf{W}_{2}\mathbf{A}_{in,2}  \;\;.
\end{equation}
For SOR IRS network channel, we have
\begin{equation}\label{eq:p2}
\mathbf{H}_{I,2}=\mathbf{A}_{out,2}^T\mathbf{W}_{2}\mathbf{E}_{12}
    \mathbf{W}_{1}\mathbf{A}_{in,1}+\mathbf{A}_{out,1}^T\mathbf{W}_{1}\mathbf{E}_{21}
    \mathbf{W}_{2}\mathbf{A}_{in,2}  \;\;.
\end{equation}
Note that $\mathbf{E}_{12}$ and $\mathbf{E}_{21}$ are the LoS channels between between IRS$_1$ and IRS$_2$, as we derived in Eq. (\ref{eq:tworis_conju}), where $\mathbf{E}_{12}=\mathbf{E}_{21}^T$.

We can observe the number of C-LoS path components in $\mathbf{H}_{I,\gamma}$ with different $K$ and different $\gamma$ variates and still follows the permutation's rule. For example, for $K=2$ we have the IRS candidate set $\kappa=\{1,2\}$ which means there are only IRS$_1$ and IRS$_2$ in the environment. Since $\gamma=1$, the number of FOR paths is equal to $2$ as one FOR path passes through IRS$_{1}$ and another one passes through IRS$_{2}$. Using the permutation rule, we can denote the number of FOR paths as $\Perm{2}{1}=2$ ($\Perm{X}{Y}$ means $Y$-permutations of a set with $X$ elements, where $X,Y \in \mathbb{N^+}$). Similarly, for $K=2$ and $\Gamma=2$, the number of SOR paths equal to $2$ since $\Perm{2}{2}=2$. Specific order sequences of these two SOR paths can be enumerated here, i.e., we have $[1\;\;2]$, meaning a SOR path first passes through IRS$_1$ and then IRS$_2$, and $[2\;\;1]$, meaning another SOR path passes through IRS$_2$ and then IRS$_1$. Consequently, the total number of C-LoS paths of $\gamma$ orders in $K$-piece IRS network is equal to $\Perm{K}{\gamma}$. Note that although $\Perm{K}{\gamma}$ only includes the number of C-LoS paths which pass each IRS only once in IRS networks, we will discuss and consider C-LoS paths which repetitively visit a same IRS later.

To expand $\mathbf{H}_{I,\gamma}$ in general expression, we define an index matrix $\mathbf{X}_{\gamma}$ to denote the order sequences for all C-LoS paths in $\gamma$ orders. In particular, all rows of index matrix $\mathbf{X}_{\gamma}$ are used to hold specific order sequences of all C-LoS paths of $\gamma$ orders. For example, given $K=2, \gamma=2$, by leveraging the index matrix $\mathbf{X}_{2}$, Eq. (\ref{eq:p2}) can now be written as
\begin{equation}\label{p22}
\mathbf{H}_{I,2}=\sum_{u=1}^{\Perm{2}{2}}
    \mathbf{A}_{out,{X}_{\gamma,u2}}\mathbf{W}_{X_{\gamma,u2}}\mathbf{E}_{X_{\gamma,u1}X_{\gamma,u2}}\mathbf{W}_{X_{\gamma,u1}}\mathbf{A}_{in,{X}_{\gamma,u1}}\;,
\end{equation}
\noindent where the index matrix $ \mathbf{X}_{2}$ for $\mathbf{H}_{I,2}$ is
\begin{equation}
    \mathbf{X}_{2}=\begin{bmatrix}
  X_{2,11} & X_{2,12} \\
  X_{2,21} & X_{2,22} \\
    \end{bmatrix}=\begin{bmatrix}
  1 & 2 \\
  2 & 1 \\
    \end{bmatrix}.
\end{equation}
We can observe $[X_{2,11}\;\;X_{2,12}]$=[1\;\;2] and $[X_{2,21}\;\;X_{2,22}]=[2\;\;1]$ are exactly two sequences we enumerate. 
\subsection{The MUMOR Network Channel}
 
For arbitrary value of $\gamma$ and $K$, we define the index matrix as $\mathbf{X}_{\gamma}\in \mathbb{N^+}^{\Perm{K}{\gamma} \times \gamma}$. The term ${X_{\gamma,uv}}$ at the $u$-th row and the $v$-th column of $\mathbf{X}_{\gamma}$ is a positive integer representing an index of a specific IRS in the network. Each row of $\mathbf{X}_{\gamma}$ holds a specific and non-repetitive sequence with $\gamma$ columns. The index matrix $\mathbf{X}_{\gamma}$ has $\Perm{K}{\gamma}$ rows in total, which means all order sequences under a partial permutation $\Perm{K}{\gamma}$ are included. To generate the index matrix, one can enumerate the permutation sequences of $\gamma$ terms from the IRS candidate set $\kappa=\{1, 2, ..., K \}$ respectively as rows of $\mathbf{X}_{\gamma}$ \cite{ord1970generation}. 



\begin{theorem}\label{theo:netchan}
The general expression of $\gamma$-order IRS network channel $\mathbf{H}_{I,\gamma}$ can be written as
\begin{multline}\label{eq:pgamma}
\mathbf{H}_{I,\gamma}=\sum_{u=1}^{\Perm{K}{\gamma}}
    \mathbf{A}_{out,{X}_{\gamma,u\gamma}}
    [\prod_{v=1}^{\gamma-1}\mathbf{W}_{X_{\gamma,u(v+1)}}\mathbf{E}_{X_{\gamma,uv}X_{\gamma,u(v+1)}}]...\\
    \mathbf{W}_{X_{\gamma,u1}}\mathbf{A}_{in,{X}_{\gamma,u1}}\,
\end{multline}
where $\mathbf{X}_{\gamma}\in \mathbb{N^+}^{\Perm{K}{\gamma} \times \gamma}$ is the index matrix of $\gamma$ orders. 
\end{theorem}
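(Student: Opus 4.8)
The plan is to prove Eq.~(\ref{eq:pgamma}) by induction on the reflection order $\gamma$, formalizing the path-enumeration argument that already produced the special cases in Eq.~(\ref{eq:p0}), Eq.~(\ref{eq:p2}) and Eq.~(\ref{p22}). The underlying physical picture to make precise is that every non-repetitive $\gamma$-order C-LoS path is uniquely labelled by an ordered sequence $(i_1,\dots,i_\gamma)$ of distinct IRS indices, and corresponds to the matrix product $\mathbf{A}_{out,i_\gamma}^T\mathbf{W}_{i_\gamma}\mathbf{E}_{i_{\gamma-1}i_\gamma}\mathbf{W}_{i_{\gamma-1}}\cdots\mathbf{E}_{i_1 i_2}\mathbf{W}_{i_1}\mathbf{A}_{in,i_1}$, read from right to left along the signal flow: entry into IRS $i_1$ through $\mathbf{A}_{in,i_1}$, weighting by $\mathbf{W}_{i_1}$, each inter-IRS hop through the rank-one channel $\mathbf{E}_{i_v i_{v+1}}$ of Lemma~\ref{lemma:tworis} followed by the weight $\mathbf{W}_{i_{v+1}}$, and finally departure from IRS $i_\gamma$ through $\mathbf{A}_{out,i_\gamma}^T$. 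The rows of $\mathbf{X}_\gamma$ are, by construction, exactly the set of such sequences, so the claimed sum over $u$ is a sum over all paths.

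For the base case $\gamma=1$ the inner product $\prod_{v=1}^{\gamma-1}$ is empty and equals the identity, $\mathbf{X}_1$ has $\Perm{K}{1}=K$ single-entry rows, and Eq.~(\ref{eq:pgamma}) collapses to $\sum_{k=1}^{K}\mathbf{A}_{out,k}^T\mathbf{W}_k\mathbf{A}_{in,k}$, which is precisely the FOR channel of Eq.~(\ref{eq:FOR-mu-model}); checking that $\gamma=2$ reproduces Eq.~(\ref{p22}) gives a second consistency anchor. For the inductive step I would assume the formula at order $\gamma-1$ and show that every order-$\gamma$ path is obtained uniquely by appending to an order-$(\gamma-1)$ sequence $(i_1,\dots,i_{\gamma-1})$ one fresh index $i_\gamma\notin\{i_1,\dots,i_{\gamma-1}\}$. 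Algebraically, the ``state'' accumulated up to IRS $i_{\gamma-1}$, namely $\mathbf{W}_{i_{\gamma-1}}\mathbf{E}_{i_{\gamma-2}i_{\gamma-1}}\cdots\mathbf{W}_{i_1}\mathbf{A}_{in,i_1}$, is left-multiplied by $\mathbf{A}_{out,i_\gamma}^T\mathbf{W}_{i_\gamma}\mathbf{E}_{i_{\gamma-1}i_\gamma}$, which simultaneously inserts the new $v=\gamma-1$ factor of the product and replaces the old exit matrix $\mathbf{A}_{out,i_{\gamma-1}}^T$ by $\mathbf{A}_{out,i_\gamma}^T$, yielding exactly the order-$\gamma$ summand.

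The combinatorial bookkeeping is the step to carry out carefully: I would verify that ``extend by an admissible fresh index'' is a bijection between (order-$(\gamma-1)$ sequence, new index) pairs and order-$\gamma$ sequences, which gives the row count $\Perm{K}{\gamma-1}\,(K-\gamma+1)=\Perm{K}{\gamma}$ and certifies that the range of $u$ in Eq.~(\ref{eq:pgamma}) neither omits nor double-counts any path. I expect the main obstacle to be notational rather than conceptual: one must fix the multiplication convention inside $\prod_{v=1}^{\gamma-1}$ so that the factor for larger $v$ sits to the left, the $v=1$ factor $\mathbf{W}_{i_2}\mathbf{E}_{i_1 i_2}$ being rightmost so that it chains into $\mathbf{W}_{i_1}\mathbf{A}_{in,i_1}$, and must keep each double subscript $X_{\gamma,uv}$ aligned so that every $\mathbf{E}_{X_{\gamma,uv}X_{\gamma,u(v+1)}}$ links genuinely consecutive IRSs with its weight $\mathbf{W}_{X_{\gamma,u(v+1)}}$ immediately to its left. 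Finally, I would state explicitly that the theorem ranges only over non-repetitive sequences---matching the earlier remark that repeated-visit paths from sustained dual reflection are handled separately---since non-repetitiveness is precisely what makes the sequences genuine $\gamma$-permutations and validates the $\Perm{K}{\gamma}$ count.
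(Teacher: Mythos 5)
Your proposal is correct and follows essentially the same route as the paper, which establishes Theorem~\ref{theo:netchan} constructively by enumerating the $\gamma$-order C-LoS paths as $\gamma$-permutations of the $K$ IRS indices (rows of $\mathbf{X}_\gamma$) and superposing the corresponding cascaded products $\mathbf{A}_{out}^T\mathbf{W}\mathbf{E}\cdots\mathbf{W}\mathbf{A}_{in}$, generalizing from the $K=2$, $\Gamma=2$ example; your induction on $\gamma$ and the bijection giving $\Perm{K}{\gamma-1}(K-\gamma+1)=\Perm{K}{\gamma}$ merely make that enumeration rigorous. Your closing caveat about restricting to non-repetitive sequences also matches the paper's subsequent remark that repeated-visit paths are accommodated by enlarging the index matrix to $K(K-1)^{\gamma-1}$ rows.
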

Note that the dual reflection is common, and we should consider other C-LoS paths whose order sequences are with repetitive indices. These C-LoS paths should at least visit a single IRS of all pieces twice. For the order sequences of these C-LoS paths with repetition, the adjacent two terms in rows of the index matrix should be different as we consider that there is no LoS path between one IRS and itself thus the EM wave would not impinge on the same IRS twice immediately, i.e., $X_{\gamma,uv} \neq X_{\gamma,u(v+1)}, u \in [1,\Perm{K}{\gamma}],v \in [1,\gamma-1]$. To complete the IRS network model, we include order sequences, whose two interleaved indices can be equal to another, into the index matrix $\mathbf{X}_{\gamma}$ with extra rows. Therefore, the row dimension of $\mathbf{X}_{\gamma}$ extends from ${\Perm{K}{\gamma}}$ to ${K(K-1)^{(\gamma-1)}}$. By far, if we replace $\mathbf{H}_{I,1}$ with Eq. (\ref{eq:p_sum}) and Eq. (\ref{eq:pgamma}) in Eq. (\ref{eq:macro_MUMOR}), then the complete model of IRS network is established. 

\section{Proposed Theorems For A Single IRS}
\subsection{Optimal Sum-rate Condition Based On Single IRS}
In case of single IRS, where $\Gamma=1$, $K=1$ in Eq. (\ref{eq:p_sum}), then the received signal for all \texttt{Rxs} becomes:
\begin{equation}\label{eq:ideal_rf}
    \mathbf{y}=\mathbf{H}_{I,1}\mathbf{s}+\mathbf{n};=\mathbf{A}_{out,1}^T\mathbf{W}\mathbf{A}_{in,1}\mathbf{s}+\mathbf{n};,
\end{equation}
where
\begin{equation} \label{eq:General channel}
\mathbf{H}_{I,1}=\\
 \begin{bmatrix} 
 \mathbf{w}^{H}\mathbf{a}_{C}(\phi_{out,1}\,,\phi_{in,1}) & \!\!\!\dots\!\!\!  & \mathbf{w}^{H}\mathbf{a}_{C}(\phi_{out,1}\,,\phi_{in,N}) \\
 \mathbf{w}^{H}\mathbf{a}_{C}(\phi_{out,2}\,,\phi_{in,1}) & \!\!\!\dots\!\!\!  & \mathbf{w}^{H}\mathbf{a}_{C}(\phi_{out,2}\,,\phi_{in,N}) \\
 \vdots   & \!\!\!\ddots\!\!\!& \vdots     \\
\mathbf{w}^{H}\mathbf{a}_{C}(\phi_{out,N}\,,\phi_{in,1}) & \!\!\!\dots\!\!\! &  \mathbf{w}^{H}\mathbf{a}_{C}(\phi_{out,N}\,,\phi_{in,N})\\
 \end{bmatrix} .
\end{equation}

With equal power $P_T$ from all transmitters, the channel capacity of MU transmission on single IRS can be expressed as 
\begin{equation}\label{eq:cap_sin}
      C= log\det(\mathbf{I}_{N}+\frac{P_T}{N_{0}}\mathbf{H}_{I,1}\mathbf{H}_{I,1}^H)\;.
\end{equation}
Then, the optimization on weights $\mathbf{w}$ is equivalent to maximize the diagonal terms and minimize the off-diagonal terms in Eq. (\ref{eq:General channel}). However, it is hard to decide whether the main diagonal terms and the off diagonal terms of $\mathbf{H}_{I,1}$ can be simultaneously maximized and minimized, i.e, $|\mathbf{w}^{H}\mathbf{a}_{C}(\phi_{out,i}\,,\phi_{in,i})|=M$ and $|\mathbf{w}^{H}\mathbf{a}_{C}(\phi_{out,i}\,,\phi_{in,j})|=0,\;i\neq j$.

Note that once the spatial correlation between each transceiver pairs $\mathbf{a}_{C}(\phi_{out,u}\,,\phi_{in,v}),\;u,v=1,...,N$ are fixed, we can calculate $\mathbf{w}$. As the IRS channel is deterministic with fixed $\mathbf{w}$, the spatial correlation between all transceiver pairs is another dominating factor for deciding the sum-rate upper bound. For example, the higher the spatial channel between \texttt{Tx}$_{i}$ and \texttt{Tx}$_j$ or between \texttt{Rx}$_{i}$ and \texttt{Rx}$_j$, the lower the channel ranks and singular values of $\mathbf{H}_{I,1}$ are, which further lower the upper bound of the overall sum-rate in a specific spatial realization. To find an optimal upper bound of sum-rate, we derive the optimal condition in spatial correlation between each transceiver pair. In this case, every pair can leverage the optimal gain brought by the single IRS. Besides, the interference between each pair can be nullified simultaneously.

Let the $i$-th pair user locate at $\phi_{in,i}=\alpha_{i}\,,\phi_{out,i}=\beta_{i}$ and the $j$-th pair locate at $\phi_{in,j}=\alpha_{j}\,,\phi_{out,j}=\beta_{j}$ where $i \neq j, i,j=1,2,...,N$. Denote $\Delta r=\frac{d}{\lambda}$ as the normalized spacing between each element since $d$ is the distance between each element and $\lambda$ is the carrier wavelength. Additionally, denote $L=M\Delta r$ is the relative length respect to normalized spacing. Then we have  

\begin{equation}
w_{m}=e^{-j\zeta_{m}kdm}\,,\theta_{m}\in(0,2\pi]\,,m=1,\dots,M\;\;,
\end{equation}
where
\begin{equation}
    \zeta_m=-\cos\alpha_{i}-\cos\beta_{i}+\frac{K}{\Delta r}
\end{equation}
is the optimal factor given by maximal ratio combining (MRC) algorithm to realize power gain of the $i$-th pair user, i.e., $|\mathbf{w}^{H}\mathbf{a}_{C}(\phi_{out,i}\,,\phi_{in,i})|=M$.

\begin{lemma}\label{theo:opp}
Given $N$ pairs of transceivers assisted by a single piece IRS, the optimal sum-rate upper bound can be obtained when each transceiver pair's position for \texttt{Tx} and \texttt{Rx} are at
%
\begin{equation}\label{eq:optimal pos_ula_a}
\alpha_{j}=\cos ^{-1}\left(\frac{j}{L}-\zeta_m-\cos \beta_{i} \pm \frac{1}{\Delta r}\right)
\end{equation}
and
 \begin{equation}\label{eq:optimal pos_ula_b}
\beta_{j}=\cos ^{-1}\left(\frac{j}{L}-\zeta_m-\cos \alpha_{i} \pm \frac{1}{\Delta r}\right)
\end{equation}
 respectively.
\end{lemma}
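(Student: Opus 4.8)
The plan is to collapse the entire matrix statement onto the behaviour of one scalar array factor and then read the transceiver positions off its lobe/null structure. First I would make every entry of $\mathbf{H}_{I,1}$ in (\ref{eq:General channel}) explicit. Substituting the combined steering vector (\ref{eq:endest-1})--(\ref{eq:endest-11}) and the weight $w_{m}=e^{-j\zeta_m kdm}$ turns the $(v,u)$ entry into a geometric sum
\begin{equation}
\mathbf{w}^{H}\mathbf{a}_{C}(\phi_{out,v},\phi_{in,u})={l}_{IRS}\sum_{m=0}^{M-1}e^{-jkd\,\Psi_{vu}m},
\end{equation}
with the single scalar argument $\Psi_{vu}=\zeta+\cos\phi_{out,v}+\cos\phi_{in,u}$ and $\zeta$ the $m$-independent value of $\zeta_m$. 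Using $kd=2\pi\Delta r$ and $L=M\Delta r$, its modulus is the Dirichlet kernel ${l}_{IRS}\,|\sin(\pi L\Psi_{vu})/\sin(\pi\Delta r\,\Psi_{vu})|$, which is $\tfrac{1}{\Delta r}$-periodic, attains its maximum $M$ exactly at $\Psi_{vu}\in\tfrac{1}{\Delta r}\mathbb{Z}$ (main and grating lobes) and vanishes exactly at $\Psi_{vu}\in\tfrac{1}{L}\mathbb{Z}\setminus\tfrac{1}{\Delta r}\mathbb{Z}$. By the MRC choice of $\zeta$ the reference pair gives $\Psi_{ii}=K/\Delta r$, a lobe, which is precisely the stated $|\mathbf{w}^{H}\mathbf{a}_{C}(\phi_{out,i},\phi_{in,i})|=M$.

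Second, I would fix the target structure of $\mathbf{H}_{I,1}$. Each array gain is bounded by $M{l}_{IRS}$, so a per-pair gain cannot exceed that of a lobe; granting an interference-removing genie then upper-bounds the capacity (\ref{eq:cap_sin}) by $N\log(1+\tfrac{P_T}{N_0}M^2{l}_{IRS}^2)$. This bound is met with equality iff $\mathbf{H}_{I,1}\mathbf{H}_{I,1}^{H}=M^2{l}_{IRS}^2\mathbf{I}_N$, i.e. iff every diagonal entry sits on a lobe ($\Psi_{uu}\in\tfrac1{\Delta r}\mathbb{Z}$) and every off-diagonal entry sits on a null ($\Psi_{vu}\in\tfrac1L\mathbb{Z}\setminus\tfrac1{\Delta r}\mathbb{Z}$). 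Thus the lemma reduces to placing, for each $j\neq i$, the pair $j$ so that $\Psi_{jj}$ is a lobe while the two cross terms $\Psi_{ij}$ and $\Psi_{ji}$ are nulls.

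Third, I would solve the null equations. Requiring $\Psi_{ij}=\zeta+\cos\beta_i+\cos\alpha_j\in\tfrac1L\mathbb{Z}$ gives $\cos\alpha_j=\tfrac{n}{L}-\zeta-\cos\beta_i$, and requiring $\Psi_{ji}=\zeta+\cos\beta_j+\cos\alpha_i\in\tfrac1L\mathbb{Z}$ gives $\cos\beta_j=\tfrac{n'}{L}-\zeta-\cos\alpha_i$. Labelling the chosen null index $j$ and absorbing the harmless grating shift as $\pm\tfrac1{\Delta r}$ reproduces exactly (\ref{eq:optimal pos_ula_a})--(\ref{eq:optimal pos_ula_b}); inverting the cosine and checking $|\cos\alpha_j|,|\cos\beta_j|\le1$ for a feasible angle completes this part.

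The hard part will be the simultaneous consistency of the lobe and null requirements rather than either one alone. Because $\Psi_{ij}+\Psi_{ji}=\Psi_{ii}+\Psi_{jj}$ holds identically from the outer-sum form of $\Psi$, fixing both cross terms onto nulls of indices $n,n'$ pins $\Psi_{jj}$ to numerator $n+n'-KM$ over $L$; this lands on a genuine lobe only when $n+n'\equiv0\pmod M$, i.e. when the two nulls are chosen complementarily, whereas using the same index $j$ in both lines forces $2j\equiv0\pmod M$. Establishing that such a complementary, feasible choice exists for every pair at once—so that the diagonal-gain and interference-null targets never conflict across the whole matrix—is the crux; the remaining steps are substitution into the Dirichlet kernel and an $\arccos$. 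I would also scrutinise the optimality framing, since it is the genie bound above, rather than the raw $\log\det$, that the diagonal configuration attains.
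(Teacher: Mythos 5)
Your proposal follows essentially the same route as the paper's Appendix~A: fix the weights by MRC for a reference pair $i$, recognize every entry of $\mathbf{H}_{I,1}$ as a Dirichlet kernel in a single scalar argument, and place the remaining pairs so that diagonal entries sit on lobes (period $1/\Delta r$) and off-diagonal entries on nulls (spacing $1/L$), then invert the cosine. The only substantive divergence is that you impose \emph{two} independent null conditions ($\Psi_{ij}$ and $\Psi_{ji}$) and then worry whether $\Psi_{jj}$ still lands on a lobe, whereas the paper imposes one null condition (on $h_{ji}$, yielding $\beta_j$) and one lobe condition (Eq.~(\ref{eq:optimaltx}), yielding $\alpha_j$); by the identity $\Psi_{ij}+\Psi_{ji}=\Psi_{ii}+\Psi_{jj}$ that you yourself derived, the paper's ordering automatically forces the remaining cross term onto a null, so the ``crux'' you defer closes immediately and is not an open gap. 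Your observation that reading the \emph{same} literal index $j$ in both (\ref{eq:optimal pos_ula_a}) and (\ref{eq:optimal pos_ula_b}) would require $2j\equiv 0 \pmod M$ is correct and is a genuine imprecision in the lemma's statement: the intended (and internally consistent) reading takes complementary indices, $+j$ for one angle and $-j$ for the other, which is exactly what the paper's numerical example in Section~VI does (e.g.\ $(68.53\degree,101.95\degree)$ corresponds to null indices $-1$ and $+1$). What remains unaddressed in both your plan and the paper is the pairwise consistency among the $N-1$ non-reference pairs (one needs distinct null indices modulo $M$ so that $\Psi_{jj'}$ is also a null for $j\neq j'$), together with the feasibility check $\lvert\cos\alpha_j\rvert,\lvert\cos\beta_j\rvert\le 1$ that you do mention; neither point is difficult given the periodic structure, and neither changes the conclusion.
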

Lemma \ref{theo:opp} reveals that if the position of each transceiver can be coordinated correspondingly, $\mathbf{H}_{I,1}$ can be optimized such that diagonal terms can be maximized and off-diagonal terms can be nullified respectively at the same time. Physically, each reflected beam towards each \texttt{Rx} is orthogonal to each other. The whole IRS channel can be orthogonal space-division multiplexed (OSDM) by $M$ pairs of the transceiver. Thus, we call links with correlation obeying Lemma \ref{theo:opp} as the optimal link. Proof of Lemma \ref{theo:opp} is shown in Appendix A. 

Additionally, without changing $L$, no matter how the element number and spacing variate, the optimal condition in Lemma \ref{theo:opp} will not change. In fact, the characteristic of channel rank is essentially proportional to $L$ \cite{tse2005fundamentals}. Therefore, we propose to use $d=\frac{\lambda}{2}$, since this is the maximal spacing for a fixed $L$ to secure a narrowest reflected beam, which causes no grating lobe of the reflected beam. Note that, there is a trade-off between energy efficiency and spacing as well. This is due to smaller spacing resulting in fewer channel ranks and larger beamwidth. Still, the redundant beam, causing energy waste in trivial directions, will less likely occur \cite{liu2022multi}. Thus, the actual spacing can be less than this value based on different design criteria. Some discussions about the spacing of elements and the beamwidth of IRSs can be referred in \cite{Ozdogan2020,Liu2019,Han2021}.  
With fixed $L$, $M$, half-wavelength spacing, we have
\begin{theorem}\label{theo:3}
Given all transceivers are optimally positioned, the upper bound of the sum-rate for a single IRS is 

\begin{equation}\label{eq:limit_single}
      C_{Max}= Nlog(1+\frac{P_{T}M^2}{N_{0}}),\;\textnormal{if}\; N\leq M
           \;,
\end{equation}
where $P_{T}$ is the power of \texttt{Txs}, $N$ is the spatial multiplexing gain and $N_{0}$ is the noise power at the \texttt{Rxs}, given $M$ pairs. 
\end{theorem}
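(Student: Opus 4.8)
The plan is to push the orthogonality structure supplied by Lemma \ref{theo:opp} directly into the capacity expression \eqref{eq:cap_sin} and then evaluate the determinant of a scaled identity. Once every transceiver pair is placed on an optimal link, Lemma \ref{theo:opp} guarantees that the single weight vector $\mathbf{w}$ simultaneously saturates the array gain on each diagonal term, $|\mathbf{w}^{H}\mathbf{a}_{C}(\phi_{out,i},\phi_{in,i})| = M$, while annihilating every off-diagonal term, $\mathbf{w}^{H}\mathbf{a}_{C}(\phi_{out,i},\phi_{in,j}) = 0$ for $i\neq j$. Consequently the channel matrix $\mathbf{H}_{I,1}$ of \eqref{eq:General channel} reduces to a diagonal matrix whose diagonal entries all have modulus $M$.

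With this reduction the remaining computation is routine. Since $\mathbf{H}_{I,1}$ is diagonal with diagonal entries of modulus $M$, the Gram matrix $\mathbf{H}_{I,1}\mathbf{H}_{I,1}^{H}$ is diagonal with every diagonal entry equal to $M^{2}$, the arbitrary phases cancelling under the Hermitian product, so $\mathbf{H}_{I,1}\mathbf{H}_{I,1}^{H} = M^{2}\mathbf{I}_{N}$. Substituting into \eqref{eq:cap_sin} gives
\begin{equation}
C_{Max} = \log\det\!\left(\mathbf{I}_{N} + \frac{P_{T}M^{2}}{N_{0}}\mathbf{I}_{N}\right) = \log\left(1 + \frac{P_{T}M^{2}}{N_{0}}\right)^{N} = N\log\!\left(1 + \frac{P_{T}M^{2}}{N_{0}}\right),
\end{equation}
using $\det(c\,\mathbf{I}_{N}) = c^{N}$ with $c = 1 + P_{T}M^{2}/N_{0}$, which is exactly the claimed expression.

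Two qualitative points still need to be argued: that this value is an \emph{upper} bound and that it requires $N \leq M$. For the per-stream ceiling, note that $\|\mathbf{w}\| = \sqrt{M}$ and $\|\mathbf{a}_{C}(\phi_{out,i},\phi_{in,i})\| = \sqrt{M}$ (unit-modulus entries), so Cauchy--Schwarz caps each diagonal gain at $|\mathbf{w}^{H}\mathbf{a}_{C}(\phi_{out,i},\phi_{in,i})|^{2} \leq M^{2}$, hence no individual stream can exceed the SNR $P_{T}M^{2}/N_{0}$. Nullifying all interference decouples the $N$ streams so their rates simply add, and the decoupled total $N\log(1 + P_{T}M^{2}/N_{0})$ is the best such sum can be. The constraint $N \leq M$ is then a dimension count: annihilating all off-diagonal terms forces the $N$ combined steering vectors $\{\mathbf{a}_{C}(\phi_{out,i},\phi_{in,i})\}_{i=1}^{N}$ to be mutually orthogonal, and at most $M$ mutually orthogonal vectors can live in $\mathbb{C}^{M}$.

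The step I expect to be the main obstacle is the upper-bound claim itself, not the determinant evaluation. The tension is that maximal diagonal gain and zero interference pull against each other: aligning $\mathbf{w}$ with one combined steering vector to reach modulus $M$ generally raises leakage into the other rows, and one might worry that a configuration with larger aggregate $\|\mathbf{H}_{I,1}\|_{F}$ but correlated columns could inflate the dominant eigenvalue of $\mathbf{H}_{I,1}\mathbf{H}_{I,1}^{H}$ and thereby the product $\prod_{i}(1 + P_{T}\sigma_{i}^{2}/N_{0})$. The crux is to show that the orthogonal-link configuration is the global maximizer of \eqref{eq:cap_sin} over admissible positions, not merely a stationary point; I would establish this by combining the per-stream Cauchy--Schwarz cap with the structural fact that an $\mathbf{H}_{I,1}$ with all entries at modulus $M$ is necessarily rank one, so the feasible set of structured channels cannot simultaneously present $N$ large, balanced singular values, leaving the interference-free configuration — where every available spatial degree of freedom is already saturated at its individual ceiling — as the optimum.
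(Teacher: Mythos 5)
Your first two paragraphs reproduce exactly the paper's (implicit) argument: the paper gives no separate proof of Theorem~\ref{theo:3}, treating it as an immediate consequence of Lemma~\ref{theo:opp} --- once the optimal positions make $\mathbf{H}_{I,1}$ diagonal with modulus-$M$ entries, $\mathbf{H}_{I,1}\mathbf{H}_{I,1}^{H}=M^{2}\mathbf{I}_{N}$ and the log-det in \eqref{eq:cap_sin} collapses to $N\log(1+P_{T}M^{2}/N_{0})$ --- combined with the Cauchy--Schwarz cap $|\mathbf{w}^{H}\mathbf{a}_{C}|\leq M$ and the dimension count giving $N\leq M$. That part is correct and is the same route the paper takes.

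The one step to reconsider is your final paragraph. The argument that ``an $\mathbf{H}_{I,1}$ with all entries at modulus $M$ is necessarily rank one, hence the interference-free configuration is the global maximizer'' does not actually settle global optimality of \eqref{eq:cap_sin} over all admissible positions: a rank-one $\mathbf{H}_{I,1}$ with every entry at modulus $M$ (all transceivers angularly co-located) has a single squared singular value $N^{2}M^{2}$, and at low SNR $\log(1+P_{T}N^{2}M^{2}/N_{0})\approx N^{2}P_{T}M^{2}/(N_{0}\ln 2)$, which exceeds $N\log(1+P_{T}M^{2}/N_{0})\approx N P_{T}M^{2}/(N_{0}\ln 2)$ by a factor of $N$. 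So the orthogonal-link configuration is not an unconditional maximizer of the log-det at every SNR. This does not invalidate the theorem, because its hypothesis ``given all transceivers are optimally positioned'' already fixes the Lemma~\ref{theo:opp} geometry and hence the diagonal structure; the paper never claims global optimality over positions, and indeed acknowledges the power-gain-versus-multiplexing trade-off at low SNR in Section~V. You should drop or heavily qualify that last step rather than rely on it.
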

Based on Theorem \ref{theo:3}, the sum-rate upper bound is reached when $N=M$ and each pair receives the power gain of $M^2$. If normalized power from the \texttt{Tx} is considered without path-loss, then the power gain should be $1$ since the whole IRS network is a passive system.
When $N>M$, the interference between users is unavoidable, and now sum-rate should be determined specifically by the spatial correlation of transceivers and the ratio between $N$ and $M$. Other multiplexing schemes are proposed to avoid the inter-user interference if $N>M$. However, this case is rare in the real situation since $N\leq M$ can be guaranteed as there can be hundreds of thousands of IRS elements while keeping the far-field condition \cite{emil2020nearf}. 

Note that, the upper bound in Theorem \ref{theo:3} is hard to achieve as transceivers can not always stay at the optimal position in Lemma \ref{theo:opp}. Moreover, the element spacing may be less than half wavelength, and the mutual coupling effect can be an issue \cite{Bjornson2021coup}. Nevertheless, Theorem \ref{theo:3} is meaningful as it analytically provides a sum-rate upper bound for each IRS and can only be obtained by satisfying the optimal condition in Lemma \ref{theo:opp}.
\subsection{Interference-free Condition Based on A Single IRS}

When spatial correlation between transceiver pairs are not orthogonal, with one IRS, we can nullify the interference to achieve interference-free transmission. 



        \begin{lemma}\label{lemma:inff}
        In order to achieve interference-free transmission without orthogonal spatial correlations between each transceiver pairs, the element number on a single IRS should satisfy $M \geq N^2$.
        \end{lemma}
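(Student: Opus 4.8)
The plan is to recast interference-free transmission as a linear controllability question for the single weight vector $\mathbf{w}$. From Eq.~(\ref{eq:General channel}) every entry of the channel is the linear functional $(\mathbf{H}_{I,1})_{ij}=\mathbf{w}^{H}\mathbf{a}_{C}(\phi_{out,i},\phi_{in,j})$, so stacking the $N^{2}$ entries gives $\mathrm{vec}(\mathbf{H}_{I,1})=\overline{V^{H}\mathbf{w}}$, where the columns of $V\in\mathbb{C}^{M\times N^{2}}$ are the combined steering vectors $\mathbf{a}_{C}(\phi_{out,i},\phi_{in,j})$ for all pairs $i,j$. Interference-free transmission requires the $N(N-1)$ off-diagonal functionals to vanish while the $N$ diagonal functionals stay nonzero. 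Since no structure is imposed on the geometry beyond non-orthogonality, I would demand that \emph{any} such diagonal target be realizable, i.e.\ that the map $\mathbf{w}\mapsto\mathbf{H}_{I,1}$ be surjective onto $\mathbb{C}^{N\times N}$, and then translate surjectivity into a rank condition on $V$.

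First I would expose the Vandermonde structure of $V$. Combining Eq.~(\ref{eq:endest-1}) with Eq.~(\ref{eq:endest-11}),
\[
(\mathbf{a}_{C}(\phi_{out,i},\phi_{in,j}))_{m}=l_{IRS}\,z_{ij}^{\,m-1},\qquad z_{ij}=e^{-jkd(\cos\phi_{out,i}+\cos\phi_{in,j})},
\]
so each column is a scaled Vandermonde vector whose node is fixed by the effective angle $\cos\phi_{out,i}+\cos\phi_{in,j}$. Under the non-orthogonality hypothesis these $N^{2}$ effective angles, and hence the nodes $z_{ij}$, are distinct, which makes the columns of $V$ a Vandermonde system. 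The standard fact that Vandermonde vectors with distinct nodes are linearly independent exactly as long as their number does not exceed the ambient dimension then yields $\mathrm{rank}\,V=\min(M,N^{2})$.

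From here the bound follows in both directions. Surjectivity of $\mathbf{w}\mapsto\mathbf{H}_{I,1}$ is equivalent to $\mathrm{rank}\,V=N^{2}$, which by the previous step forces $M\ge N^{2}$; conversely, when $M\ge N^{2}$ the full-column-rank system $V^{H}\mathbf{w}=\overline{\mathrm{vec}(\mathbf{H}_{I,1})}$ is solvable for the desired diagonal (interference-free) target, so $M\ge N^{2}$ is both necessary and sufficient. This also cleanly separates the present case from the orthogonal case of Theorem~\ref{theo:3}, where the cross terms vanish for free and only $M\ge N$ is needed to host the $N$ independent beams.

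The hard part will be justifying that full $N^{2}$-dimensional controllability is the right requirement. Counting only the homogeneous nulling constraints would suggest the weaker $M\ge N(N-1)+1$, since one merely needs $\mathbf{w}$ in the orthogonal complement of the $N(N-1)$ interference vectors while avoiding the $N$ diagonal hyperplanes. I would argue that, because the transceiver positions are arbitrary and uncoordinated (this is exactly what distinguishes the setting from Lemma~\ref{theo:opp}), each of the $N^{2}$ links must be assignable independently in order to guarantee simultaneous nulling of interference and preservation of every desired gain for \emph{every} admissible geometry, and this is precisely surjectivity, hence $M\ge N^{2}$. The secondary technical point is to certify distinctness of the nodes $z_{ij}$, ruling out Vandermonde aliasing; this follows from the genericity built into the non-orthogonal placement together with the half-wavelength spacing adopted after Lemma~\ref{theo:opp}, and I would state it as a standing assumption so that $\mathrm{rank}\,V=\min(M,N^{2})$ holds as claimed.
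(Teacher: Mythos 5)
Your proposal follows essentially the same route as the paper's Appendix B: both of you stack the $N^{2}$ bilinear conditions (the $N(N-1)$ nulling conditions $\mathbf{w}^{H}\mathbf{a}_{C}(\phi_{out,i},\phi_{in,j})=0$, $i\neq j$, together with the $N$ nonzero-gain conditions) into one linear system $V^{H}\mathbf{w}=\mathbf{t}$ with $V\in\mathbb{C}^{M\times N^{2}}$ built from the combined steering vectors, and deduce solvability from full rank, which requires $M\ge N^{2}$. The paper carries this out only on the concrete $M=4$, $N=2$ instance, reparametrizing the second receiver's constraints through the auxiliary vector $\mathbf{w}_{C}$ and then simply asserting that the resulting $4\times4$ coefficient matrix is full rank ``by the assumption above''; your Vandermonde argument, with nodes $z_{ij}=e^{-jkd(\cos\phi_{out,i}+\cos\phi_{in,j})}$ assumed pairwise distinct, supplies exactly the rank justification the paper omits and does so for general $M,N$, so it strengthens rather than replaces the paper's step (note that distinctness of the $z_{ij}$ is a genericity assumption, not a consequence of non-orthogonality, so you are right to state it as a standing hypothesis --- the paper's ``$d$ small enough that the complex term cannot repeat'' is the same assumption in disguise). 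Your side remark is also well taken and is the one place you go beyond the paper: interference-freeness per se only requires $\mathbf{w}$ to lie in the orthogonal complement of the $N(N-1)$ cross vectors while avoiding $N$ hyperplanes, which under the same distinct-node assumption is already achievable for $M\ge N(N-1)+1$; the $N^{2}$ bound corresponds to the stronger demand of prescribing all $N^{2}$ channel entries (the paper implicitly makes this same stronger demand by treating $\delta_{1},\delta_{2}$ as free targets). Neither you nor the paper proves that $M\ge N^{2}$ is \emph{necessary} for interference-free transmission, but your write-up is explicit about where that gap sits, whereas the paper is not.
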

\noindent The interference-free condition can be easily achieved in practical deployment since each IRS can have sufficient amount of elements. The proof is given Appendix B, where we also show a single IRS is able to support multiple streams transmission with only one vector.

As there is a similarity and equivalence in the function between IRS and MIMO precoding/decoding, with the deployment of the IRS, transceivers can transfer some workloads to the IRS. Thus the structure of transceivers can be simplified.  Nevertheless, compared with traditional scheme, IRSs still have its unique advantages over the traditional MU scheme. In particular, the IRS can suppress the inter-user interference before receivers are jammed while the traditional scheme can not conveniently suppress the inter-user interference at receivers due to the joint decoding is usually not available.

         
        
\begin{remark}
 The number of transceivers that access the IRS network from a single IRS should be significantly below the number of elements of that a single IRS.
 And it is better transceivers can locate in a much more different direction than one IRS, or extra pieces IRS nearby should be involved to solve this issue since extra pieces IRS can distinguish these transceivers from a much more different location.
\end{remark}

 \begin{figure}
  \includegraphics[width=90mm,height=70mm,center]{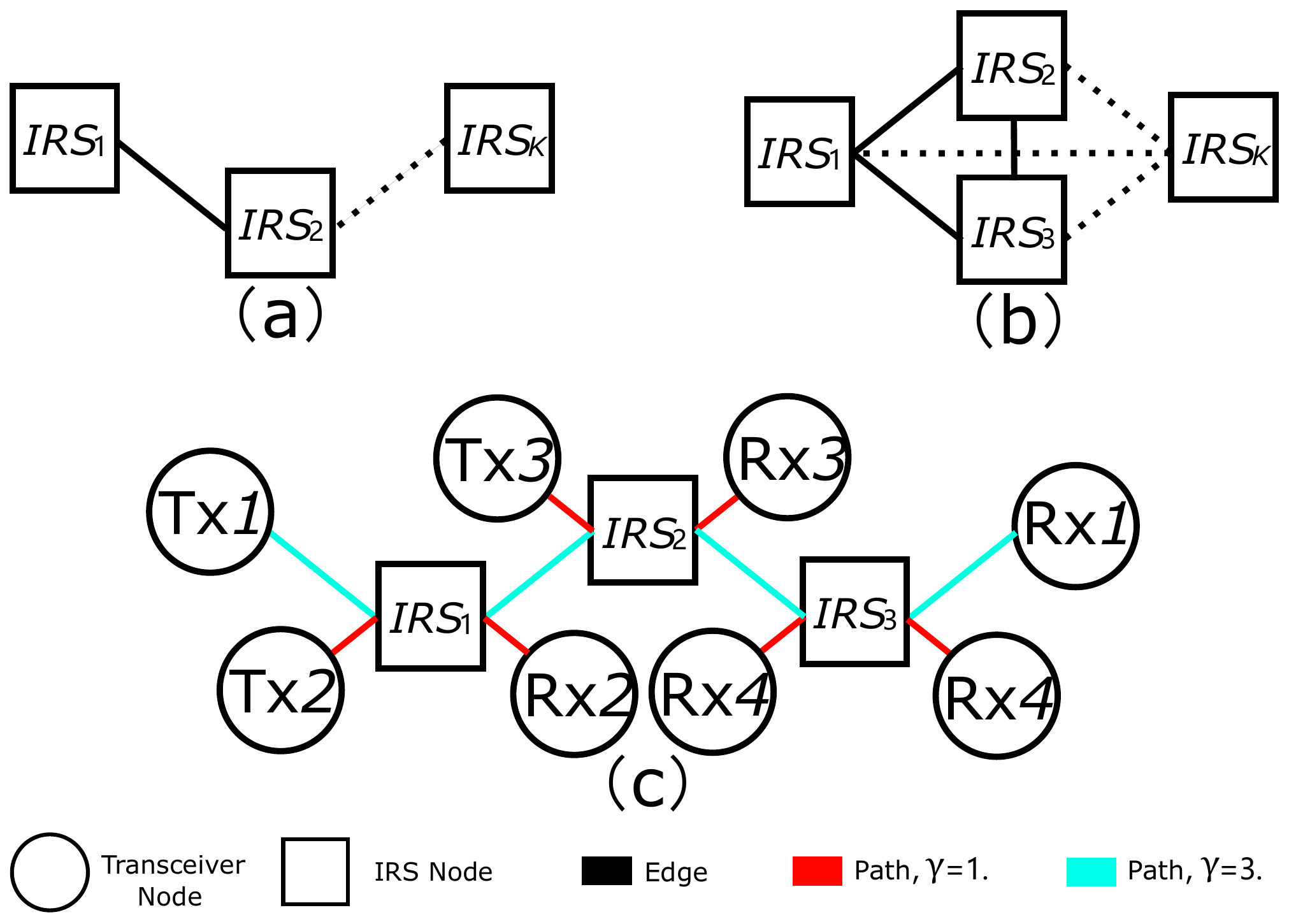}
  \caption{MUMOR Transmission based on IRS network with $K$ single IRS. (a) An LG topology. (b) A CG topology. (c) An example of a network shaping an LG to serve MU where $K$=3, $M=2$, $\Gamma=3$, $N$=4. The solid line represents the edge that connects two adjacent nodes. The dashed line represents a series of other adjacent connections that are omitted.}
  \label{fig:Multi_two}
\end{figure} 

\section{Analysis on Sum-rate upper bound of IRS Network}
For simplicity, we use the terminology of graph theory for the following discussion \cite{Jain2005}. We call an LoS channel as an edge, a single IRS/transceiver as a node, nodes connected to one node by an edge as adjacent nodes, the number of edges that are incident to a node as the degree, a C-LoS path as a path, the number of IRS nodes that the path passes through as the path length or simply length, and the IRS network as the network.

In the network, the sum-rate is affected by the network's topology and geometry, number of IRS/transceiver nodes and the weights design of IRS nodes. The topology is the connection statement of nodes by edges existing within the network while the geometry is determined by the relative AOA and AOD between arbitrary two nodes. Thus all nodes and edges have specific topological and geometrical relationship between each other, as shown in Theorem \ref{theo:netchan}. However, it is difficult to derive the exact sum-rate upper bound without prior determining to the network topology, geometry, and weights design.

Note that, Theorem \ref{theo:3} indicates each IRS node can fulfill criteria of power maximization and interference nullification given the optimal condition in Lemma \ref{theo:opp}. To maximize the EE and SE performance, we leverage Lemma \ref{theo:opp} to determine the network's geometry, topology and weights design. In particular, each IRS node can optimally serve other adjacent IRS/transceiver nodes, where maximally $M$ pair of adjacent nodes can be supported, or $2M$ degrees can be possessed by one IRS node. As the topology is versatile given $K$ nodes of IRS to form a network, we derive the sum-rate upper bound for two kinds of common graphs, which are linear graph (LG)\footnote{\textbf{Linear Graph/Path Graph}: a linear graph is a graph whose vertices/nodes can be listed in the order $v_1, v_2,..., v_n$ such that the edges exist between $v_i$ and $v_{i+1}$ where $i=1,2,...,N$. Paths are often important in their role as subgraphs of other graphs, in which case they are called paths in that graph.} and complete graph (CG)\footnote{\textbf{Complete Graph}: A complete graph is one in which every two vertices/nodes are adjacent: all edges that could exist are present.}, as shown in Fig.~\ref{fig:Multi_two}(a) and Fig.~\ref{fig:Multi_two}(b) respectively. In addition, a special topology without edges are considered, which is called the null graph (NG) and means each IRS only form FOR paths locally without LoS between any two IRS nodes in the network.

\subsection{The IRS Network In Linear Graph}
With the signal of \texttt{Tx}$_{i}$ passing along an LG with length of $K$ order, where $\Gamma=K$, we can write the received signal of \texttt{Rx}$_{i}$ after $K$ orders reflection as
\begin{multline}\label{eq:yi_gamma_mat}
     {y}_{i}=\mathbf{a}(\phi_{out,i,{X}_{K,1K}})
    [\prod_{v=1}^{K-1}\mathbf{W}_{{X}_{K,1(v+1)}}\mathbf{E}_{{X}_{K,1v}{X}_{K,1(v+1)}}]...\\
    \mathbf{W}_{X_{K,11}}\mathbf{a}(\phi_{in,i,{X}_{K,11}})s_i+n_i,
\end{multline}
where $\mathbf{a}(\phi_{out,i,{X}_{K,1K}})$ and $\mathbf{a}(\phi_{in,i,{X}_{K,11}})$ are the corresponding steering vector in matrix $\mathbf{A}_{out,{X}_{K,1K}}$ and $\mathbf{A}_{in,{X}_{K,11}}$ for $i$-th pair transceiver. Since each transceiver pair now communicates orthogonally in the network following optimal condition, the index matrix $\mathbf{X}_{K}$ now is simplified to contain only one row, holding one specific sequence of one C-LoS path. Note that though the dual reflection exist within the LG network as well. As $\Gamma=K$, there is only one path with the maximal effective length that can reach to \texttt{Rx}$_{i}$. Moreover, Eq. (\ref{eq:yi_gamma_mat}) can be written in a similar form with Eq. (\ref{eq:gene_ith}) such that
\begin{equation}\label{eq:LG}
    {y}_{i}=[\prod_{v=1}^{K}\mathbf{w}_{X_{1v}}^H\mathbf{a}_{C,i,{X}_{1v}}]s_i+n_i,
\end{equation}
where $\mathbf{a}_{C,i,k}$ means the equivalent channel of $IRS_{k}$ for $i$-th pair transceiver and $\mathbf{w}_{k}=diag(\mathbf{W}_{k})$ is the corresponding weights vector on $IRS_{k}$ for $k=1,...,K$.
 
As the optimal power gain for a single pair transceiver is $M^2$ from a single IRS, with $K$ order reflection where each IRS applying weights to realize maximal power gain in Eq. (\ref{eq:LG}), the cascaded power gain would be $M^{2K}$. In this case, EE is maximized for a single pair in an LG network. Thus, base on Eq. (\ref{eq:LG}), the sum-rate upper bound for one transceiver pair is 
\begin{equation}\label{eq:c_lg}
   C_{SU,LG,(K)}=log(1+\frac{P_{T}M^{2K}}{N_{0}})\;, 
\end{equation}
where the subscripts $SU,LG$, and $(K)$ mean a single pair, linear graph and a $K$-order reflection, respectively. Since an edge between two nodes is a rank one channel from Lemma 1, we are unable to realize multi-stream information transmission based on one edge and thus the cascaded channel of an LG is rank one. 

Nevertheless, MU transmission in an LG network is still available as each IRS nodes can have $2M$ degrees. E.g., with topology of network as shown in Fig. \ref{fig:Multi_two}(c), the sum rate upper bound can be reached by combining three $1$-length paths and a $3$-length path where each IRS node has $4$ degrees. By including the sum-rates from all $1$-length paths and one $K$-length path,
we have the MU sum-rate upper bound as 
\begin{equation}\label{eq:chain}
   C_{MU,LG,(1,K)}=log(1+\frac{P_{T}M^{2K}}{N_{0}})+ K(M-1)log(1+\frac{P_{T}M^{2}}{N_{0}}),
\end{equation}
where power of all \texttt{Txs} is equal to $P_T$ and the subscript $(1,K)$ means only the paths whose lengths are equal to $1$ and $K$ are involved. In this case, spatial multiplexing has been maximized while these paths would not introduce extra interference from reflections or dual reflections since all paths still keep spatially orthogonal.

\subsection{The IRS Network In Complete Graph}
For CG network, though multiple paths can be leveraged by one pair transceiver, this is equivalent to transfer spatial multiplexing into power gain which introduces a trade-off. To maximize spatial multiplexing gain of network, each transceiver should send one stream via one path. Thus, the network sum-rate depends on how many Eulerian paths\footnote{\textbf{Eulerian path}: or Eulerian trail is a trail in a finite graph that visits every edge exactly once (allowing for revisiting vertices/nodes).} without revisiting nodes in the CG network. Eulerian paths with revisiting nodes are excluded due to these transmissions are not necessary in the network.
\begin{figure*}
       \centering
  \subcaptionbox{$M=64$, $L=2$.\label{fig3:a}}{\includegraphics[width=2.375in]{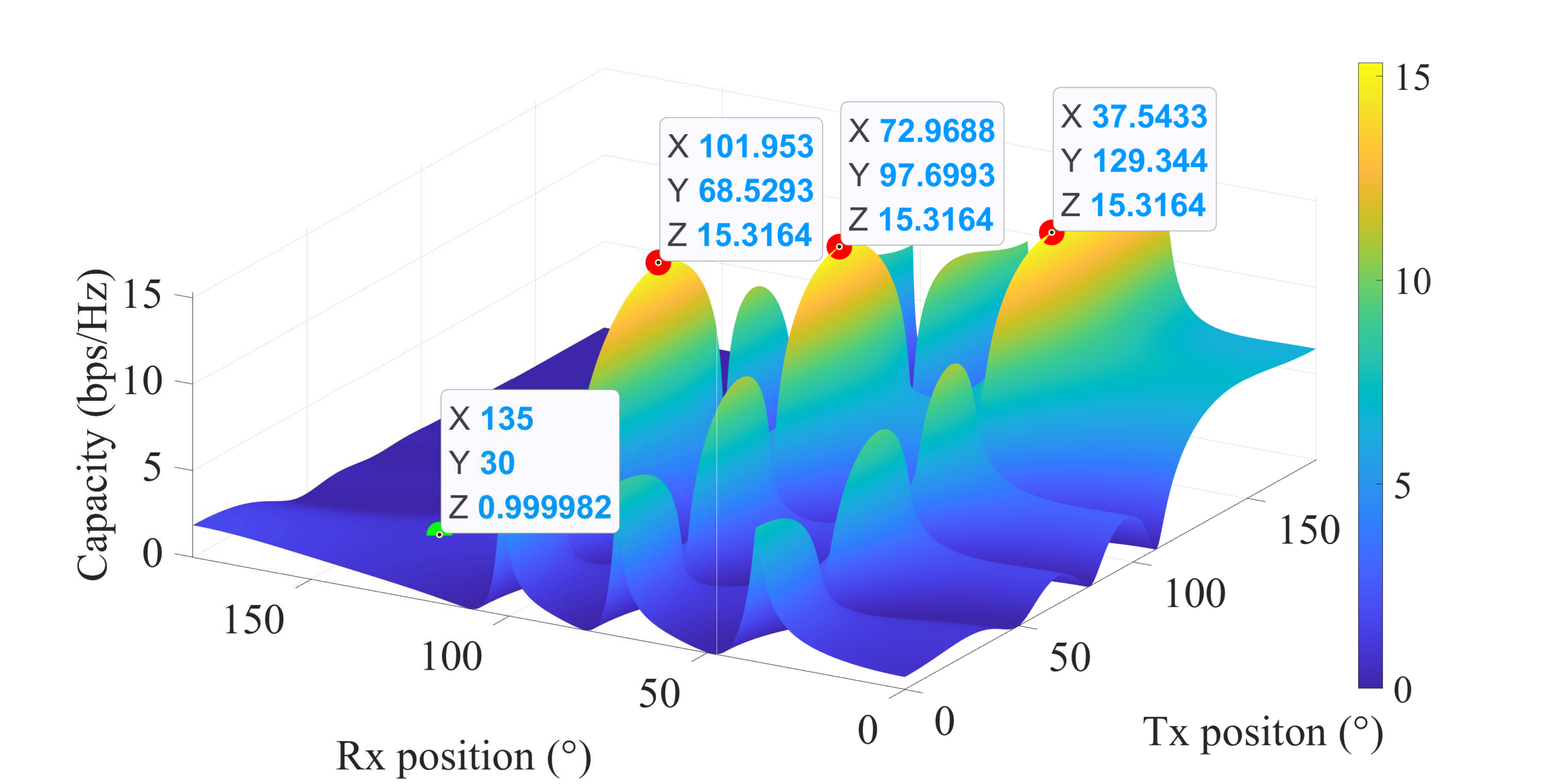}}\hspace{0em}%
  \subcaptionbox{$M=4$, $L=2$\label{fig3:b}}{\includegraphics[width=2.375in]{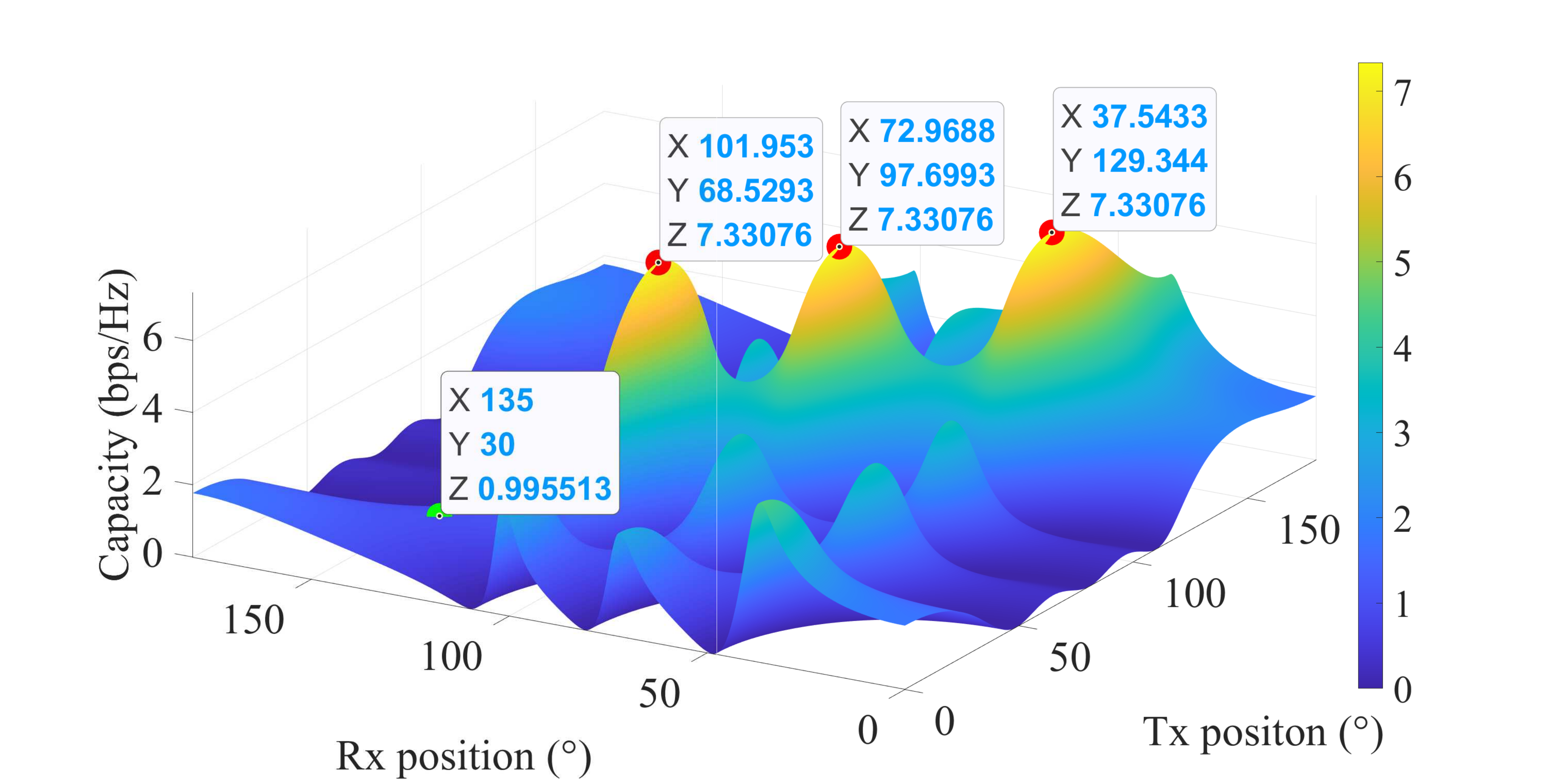}}\hspace{0em}%
 \subcaptionbox{$M=8$, $L=4$.\label{fig3:c}}{\includegraphics[width=2.375in]{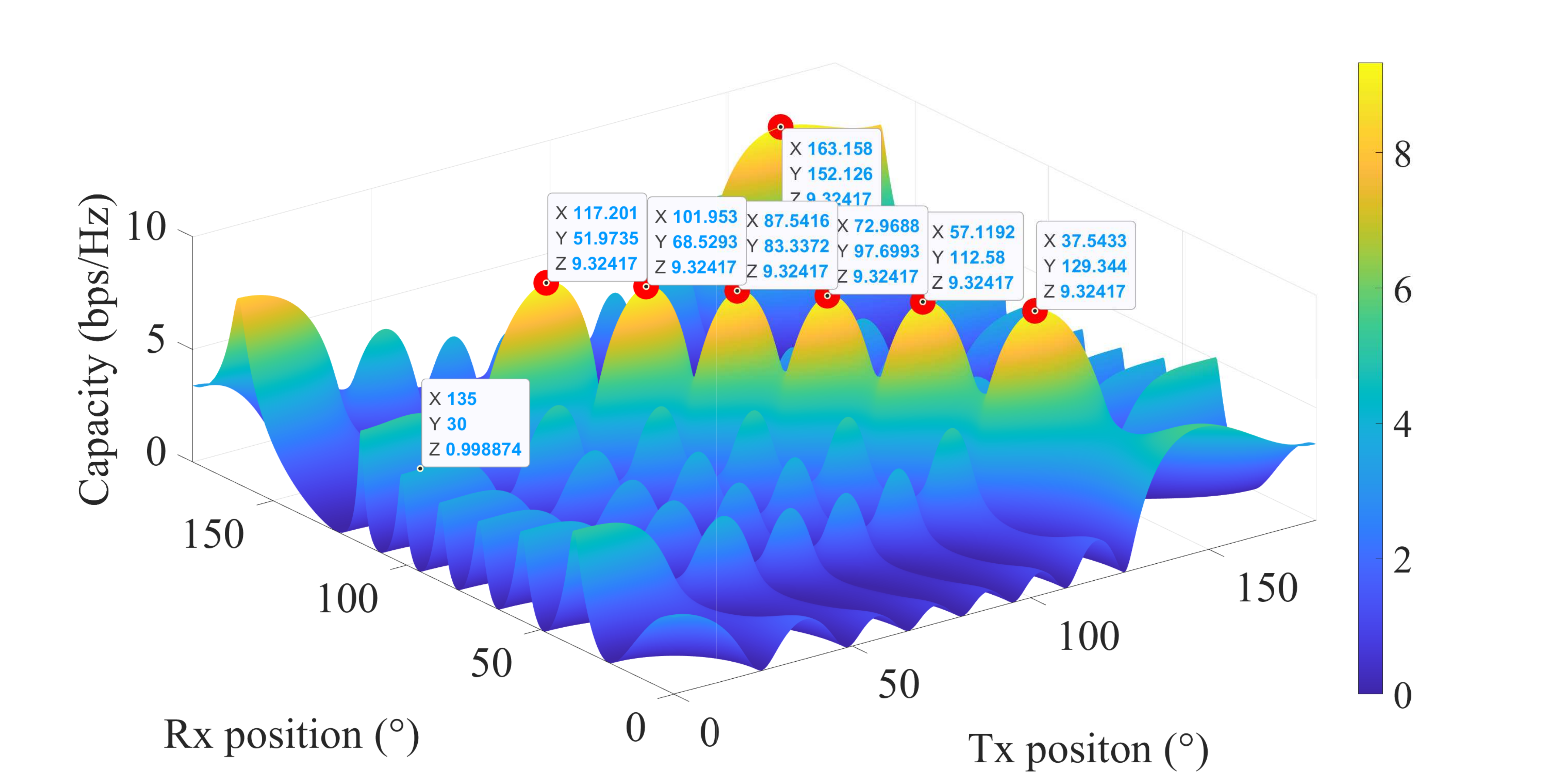}}
     \hfill
        \caption{The capacity vs pairs' position. The optimal positions for the 2nd pair are marked by red dots. Green dot marked the position of the 1st pair position.}
          \label{fig:3d_all}
\end{figure*}
To clarify the number of paths with different lengths in the network, we denote $N_\gamma$ as the number of transceiver pairs that their Eulerian paths have $\gamma$-length in the network. Thus, following Eq. (\ref{eq:c_lg}), we have
\begin{equation}\label{eq:cg_general}
   C_{MU,CG,(1,...,\Gamma)}=\sum_{\gamma=1}^{\Gamma}N_{\gamma}log(1+\frac{P_{T}M^{2\gamma}}{N_{0}}),
\end{equation}
\noindent which is the sum-rate upper bound of MUMOR transmission assisted by the CG network. Note that the number of total transceiver pair $N$ that can achieve interference-free transmission is a variable, where
\begin{equation}
  N=\sum_{\gamma=1}^{\Gamma}N_{\gamma}.
\end{equation}
Since there are multiple ways to decompose a CG into different number of Eulerian paths with different lengths, the value of $N_{\gamma}$, $\gamma=1,2,...,\Gamma$ are to be determined by a specific graph decomposition. To rewrite $N$ in a general expression, we decompose the CG into paths where all of their lengths are equal to $\gamma$. To ensure the upper bound is reached at maximal SE, all these Eulerian paths should pass through all edges. Note that, a class of graph decomposition problem is introduced here, which is determining if the CG network can be completely decomposed into paths of $\gamma$ length equally, which has been proven to be NP-complete \cite{dor1997graph}. Therefore, it is hard to determine $N_{\gamma}$ and write $N_{\gamma}$ in a general expression. 


In order to obtain a general expression of the sum-rate upper bound of CG network, we denote $\Lambda_{i}, i \in [1,N],$ as the path length for $i$-th pair transceiver, and we consider $\Lambda_i=\tau, i=1, ..., N$, and $\Gamma = \tau >1$, where $\tau$ is a specific value of path length. In addition, we denote \begin{equation}
 N_\tau=\frac{{K \choose 2 }}{\tau-1}=\frac{K(K-1)}{2(\tau-1)},
\end{equation}
 where ${K \choose 2 }$ is the total edges' number of a $K$-nodes CG. To completely decompose the CG, we should satisfy
 \begin{equation}
    N_\tau \in \mathbb{Z}\;,
 \end{equation}
as it is a necessary and sufficient condition for the existence of an edge-disjoint decomposition of a $K$-nodes CG into simple isomorphic paths consisting of $(\tau-1)$ edges each \cite{tarsi1983decomposition}. With $ N_\tau \in \mathbb{Z}$, the edge number of a CG can be equally divided up into paths with $\tau$-length. Thus, $ N_\tau$ is the multiplexing gain while the cascading power gain of a corresponding pair is $M^{2\tau}$. Following Eq. (\ref{eq:cg_general}), the sum-rate now becomes
\begin{equation}\label{eq:cg1}
   C_{MU,CG,(\tau)}=N_{\tau}log(1+\frac{P_{T}M^{2\tau}}{N_{0}}),
\end{equation}
\noindent which is the sum-rate upper bound for the MUMOR transmission for $N_\tau$ pairs transceivers with length of $\tau$. By combining the sum-rate upper bound of $1$-length paths, we have
\begin{multline}\label{eq:cg2}
   C_{MU,CG,(1,\tau)}=N_{\tau}log(1+\frac{P_{T}M^{2\tau}}{N_{0}}) +\\
   (KM-N_{\tau}\tau)log(1+\frac{P_{T}M^{2}}{N_{0}}),
\end{multline}
and now the upper bound is reached for $N=KM+N_\tau(1-\tau)$ pairs of transceiver. The form of second term can be derived similarly as to derive Eq. (\ref{eq:chain}).

For the sum-rate upper bound in a general case, i.e., path lengths are different for different pairs, the sum-rate upper bound can still be computed as long as the graph decomposition is determined. Then, the value of $N_\gamma$ is fixed, and the sum-rate upper bound can be computed using Eq. (\ref{eq:cg_general}). 

\subsection{The IRS Network In Null Graph}
When $\tau=1$, since the graph of the network has no edges, we can call it a null graph (NG). In this case, each IRS serves a local network in different cells and no edges connect any two IRS nodes. The sum-rate upper bound can be straightforwardly obtained from Theorem \ref{theo:3} as $C_{MU,NG}=KC_{Max}$, which is directly scaled by $K$-folds. Since each IRS node is isolated locally, inter-user interference is not induced. 


Also, as one proof has been shown in \cite{tse2005fundamentals} that leveraging Jensen's inequality, we know at low SNR, the sum-rate reaches an upper bound if equal decomposition is realized for the $K$ nodes CG with largest $\tau$. In addition, at high SNR, the upper bound is reached with $\tau=1$.

\section{Simulation}\label{section:RB}
\subsection{The Single IRS Optimal Capability}
In Fig.~\ref{fig:3d_all}, we consider the MRC solution of beamforming to illustrate the optimal transceiver position of IRS depicted in the Section IV, where MRC is optimal for the \nth{1} fixed pair, which is located at $\mathbf{a}_{C}(\phi_{in,1},\phi_{out,1})=(30\degree,135\degree)$ considering ULA shape's IRS. SNR is assumed to be $10$ dB.
According to the theorems proposed in this paper, we can analytically calculate the optimal available positions for the \nth{2} pair, where it can harvest maximal power gain from single IRS with nullified interference from the \nth{1} fixed pair. Analytically, these positions are $(68.53\degree,101.95\degree)$, $(97.70\degree,72.97\degree)$, and $(129.34\degree,37.54\degree)$, respectively when the relative length $L=2$. Fig.~\ref{fig:3d_all}(a), (b) and (c) show that theorems in Section IV accurately depict the optimal positions for other pairs. We can also observe that increasing the elements under the fixed-length $L$ will not change the optimal positions. All the optimal positions remain in the same place but only with higher power gain. Fig.~\ref{fig:3d_all}(c) shows that doubling $L$ also doubles the number of optimal positions, and $8$ pairs can be optimally supported in this case.


  \begin{figure}
\hspace*{0mm}
    \centering   
    \adjincludegraphics[width=90mm,height=50mm,right]{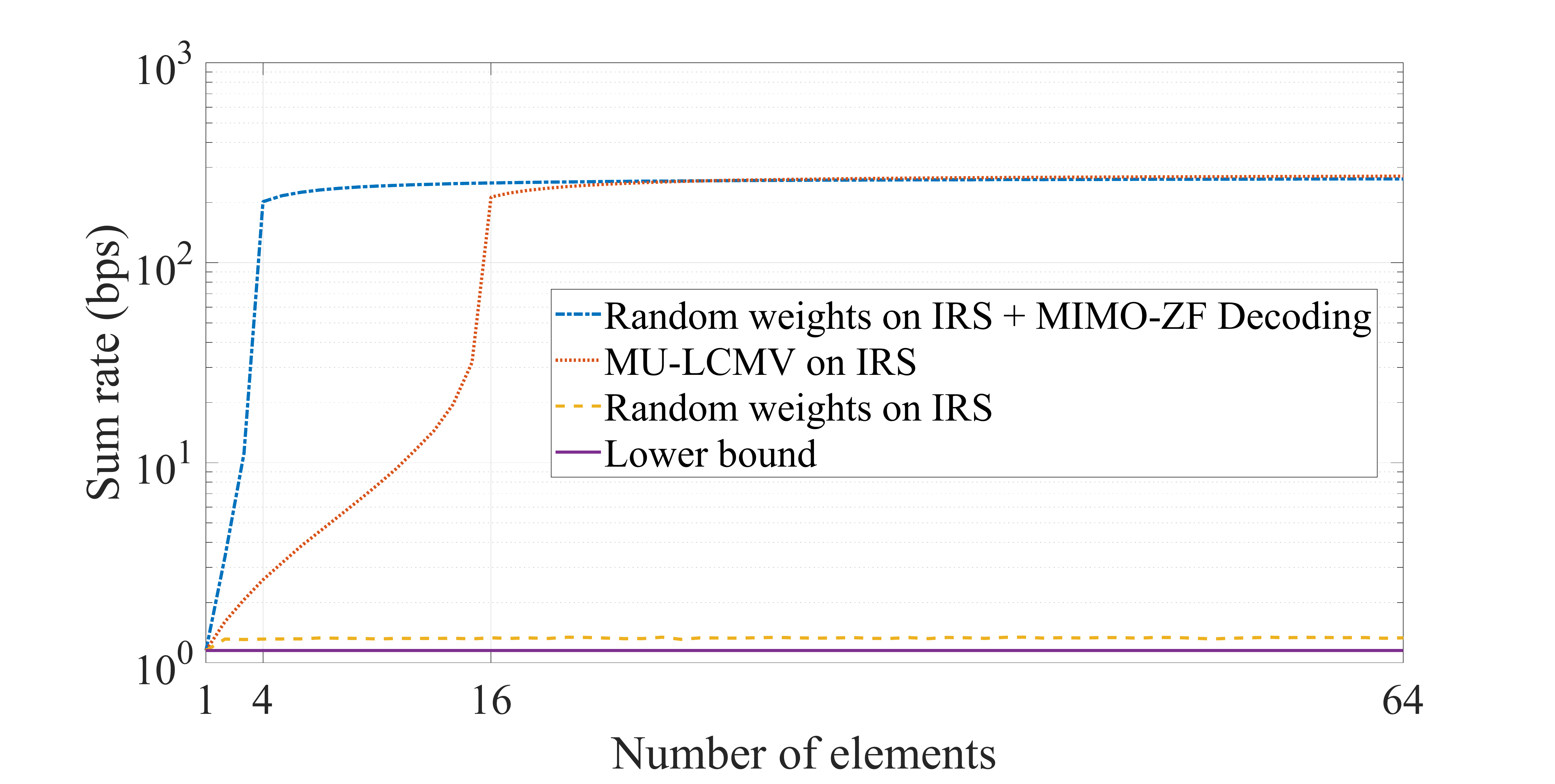}
    \caption{Sum-rates of three different transmission schemes changing with IRS elements number, given $N=4$.}
    \label{fig:equi_m}
\end{figure}
\subsection{The Single IRS Interference Suppressing}
To validate the interference-free transmission scheme is effective with only single IRS, we simulate $10000$ times the realization of three transmission schemes. The first one is the C-LoS channel with random weights on the single IRS. The second scheme still transmit through the C-LoS channel with random weights but a $4$ by $4$ joint decoding matrix at \texttt{Rxs}' side using the zero-forcing (ZF) algorithm is leveraged as a benchmark (though it may not be practically implemented). The third one transmit through the C-LoS channel with weights obtained by multi-user linearly constrained minimum variance (MU-LCMV) algorithm \cite{liu2022multi}, which can simultaneously support multiple streams by a single IRS. For a specific realization, $4$ pairs of transceivers are distributed uniformly around an IRS and transmit normalized power. Since ZF at \texttt{Rxs} causes the noise amplification of \texttt{Rxs} but MU-LCMV from IRS does not, the noise is neglected at \texttt{Rxs} for a fair comparison.

The sum-rates of these three schemes changes with the number of elements of a single IRS is shown in Fig.~\ref{fig:equi_m}. The lower bound of the sum-rate for $4$ pairs of transceivers is plotted for reference. It can be observed that with a relatively small amount of reflector elements, e.g., $4<M<16$, the IRS with MU-LCMV algorithm is less likely to outperform the traditional MIMO ZF-decoding scheme. At this point, the capability of IRS is less likely to manage the interference with a limited amount of elements. However, the IRS can suppress the interference effectively at $M=16$, where the sum-rate exhibits a jump. This is critical since the relation of $M=N^2$ in Lemma \ref{lemma:inff} is exactly satisfied. After that, the sum rate of MU-LCMV on the IRS also reaches a plateau and can have a equivalent performance with the ZF decoding scheme. Nevertheless, since the size of decoding matrix is fixed, with sufficiently large $M$ on the IRS, the MU-LCMV scheme can finally outperform the benchmark in terms of the power gain from the controlled channel.

\begin{figure}
\hspace*{0mm}
    \centering
    \adjincludegraphics[width=90mm,height=50mm,center]{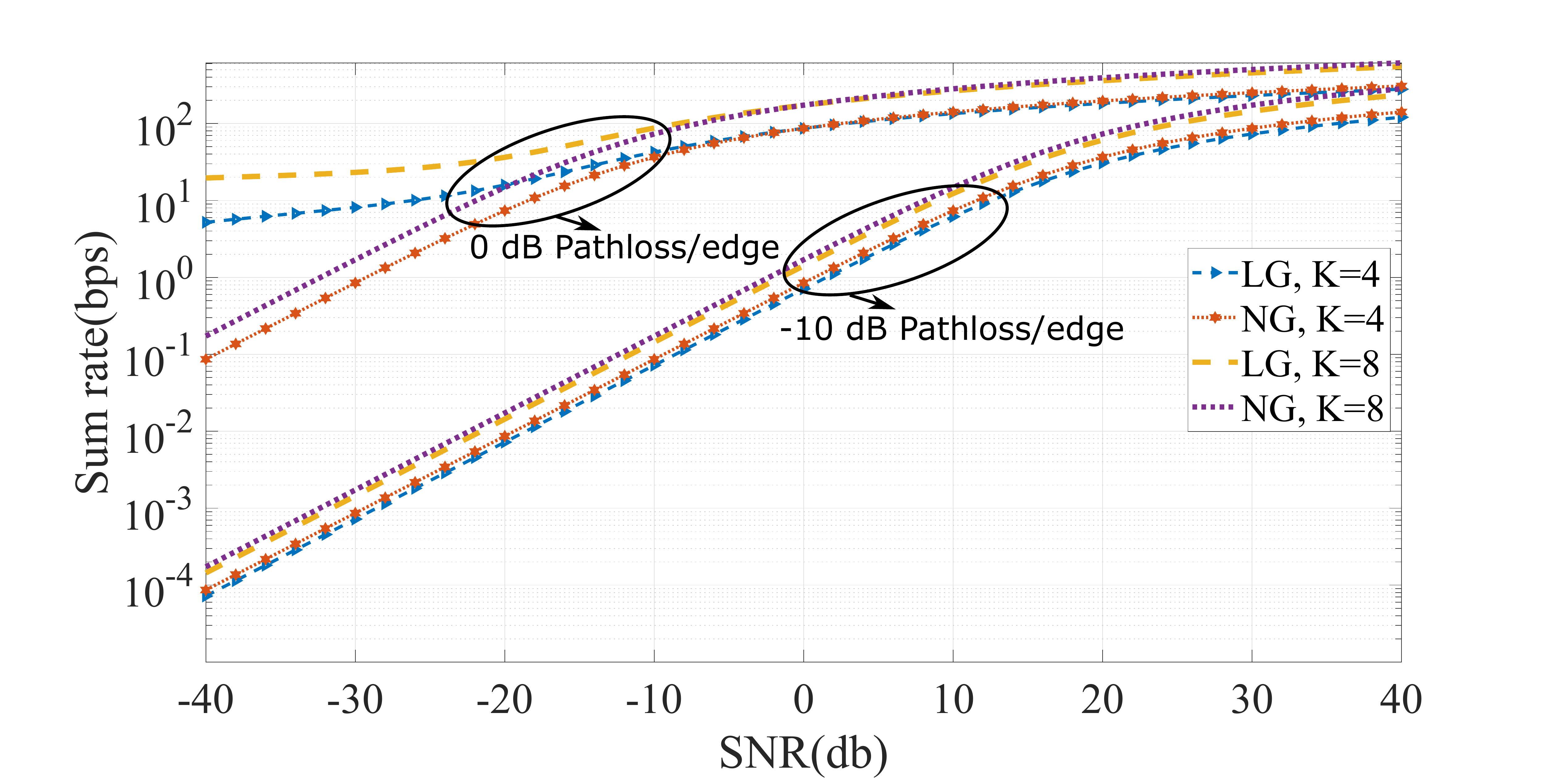}
    \caption{The sum-rate upper bound of LG and NG network with optimal condition.}
    \label{fig:chaindis}
\end{figure}

\subsection{The IRS Network Capability}
For illustrating the sum rate upper bound of networks, the pathloss is assumed to be $0$ dB while the scenario with pathloss of $-10$ dB per edge is also involved. As shown in Fig.~\ref{fig:chaindis}, the sum-rate upper bound of LG ($\Gamma=K$) and NG ($\Gamma=1$) are compared under different SNR, given $M=6$ and $K=4,8$. When pathloss is neglected, in a low SNR region, the sum-rate upper bound of the LG outperforms that of the NG due to the power gain from each C-LoS can be positively cascaded. In contrast, in a high SNR region, the sum-rate upper bound of the NG performs better than that of the LG due to larger spatial multiplexing gain is leveraged. However, given an apparent pathloss, the NG network achieves a better sum rate since each cascading of IRS node only cause larger loss on the cascaded power gain. Thus, the transmission leveraging the most FOR paths in the networks is preferred in this case. 

Fig.~\ref{fig:cap_cg} displayed the sum-rate upper bound of CG networks with different path lengths and IRS nodes, where $M=6$. For $K=4$, we can have $\Gamma=2,4$ while for $K=6$, we have $\Gamma=2,4,6$ such that the graph decomposition into Eulerian paths with equal length is complete. Note that, the sum rate upper bound of CG is dominated by the spatial multiplexing gain. Since the CG network can shape more FOR paths with less number of transceiver nodes leveraging edges of the CG, decomposing the CG with largest path length should result in the least number of MOR paths and hence the sum rate upper bound is also relating to the value of maximum order of reflections $\Gamma$. In addition, both Fig.~\ref{fig:chaindis} and Fig.~\ref{fig:cap_cg} verify the sum-rate of networks increases substantially with $K$ folds scaling, as we analyzed in Section V.



\begin{figure}
\hspace*{0mm}
    \centering
    \adjincludegraphics[width=90mm,height=50mm,center]{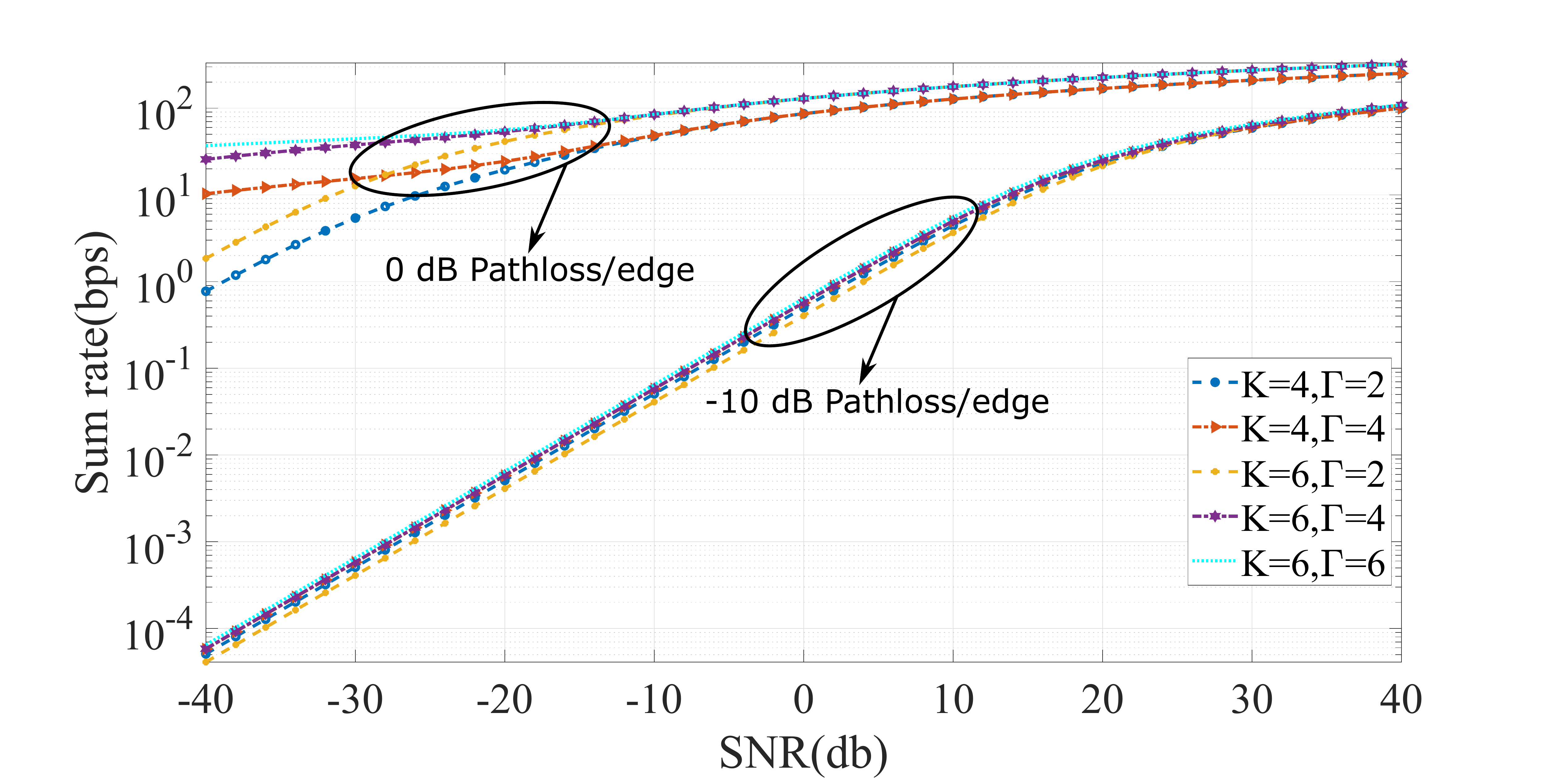}
    \caption{The sum-rate upper bound of MUMOR CG network with optimal condition.}
    \label{fig:cap_cg}
\end{figure} 

\section{Conclusion}
In this paper, we study the MUMOR transmission assisted by the IRS network. Firstly, we analytically establish a complete model of IRS network by permutationally combining two fundamental models. Secondly, the optimal condition to reach the sum-rate upper bound is derived, where the function of optimal positions for the transceivers is written in a closed form. In addition, we found that to sufficiently realize interference-free transmission, $M\geq N^2$ should be satisfied. Lastly, the sum-rate upper bound which can be provided by the IRS network is analyzed, where we specific topology can enhance the sum-rate with respect to different number of users and SNR. The simulation results verify our proposed theorems and indicate a promising $K$ folds scaling from the IRS network. 
\appendices 
\section{}\label{proof:ULA and URA optimal angle}
Lemma \ref{theo:opp} can be proved by analysing the IRS channel as a whole. I.e, we start by analysing the channel of a single transceiver pair assisted by a single IRS. By referring Eq. (\ref{eq:General channel}), we denote $\mathbf{H}_{I,1}=\mathbf{H}=\mathbf{A}_{out}^T\mathbf{W}\mathbf{A}_{in}$ for simplicity. Thus, the channel between \texttt{Tx}$_i$ and \texttt{Rx}$_i$ in $\mathbf{H}$ can be written as 
\begin{equation}\label{eq:siso channel}
\begin{split}
h_{ii}&=\mathbf{w}^{H}\mathbf{a}_{C}(\phi_{out,i},\phi_{in,i}).\\
   \end{split}
\end{equation}
We can observed that the diagonal terms in the matrix of Eq. (\ref{eq:General channel}) are signal gains for each \texttt{Rx} and these terms are required to be maximized. Other off-diagonal terms are the interference gain which should be minimized. Therefore, by calculating a optimal weights vector $\mathbf{w}$ such that the diagonal terms are maximized while nullifying off-diagonal terms, the optimal IRS based channel can be obtained and the optimal sum-rate can be achieved. Note that, if the single IRS is considered as the ULA or URA specification which has the characteristic of equal spacing between each elements, the optimal weights can be analytically obtained simply by MRC algorithm. Specifically, for ULA scenario and we let the i-th pair user locate at $\phi_{in,i}=\alpha_{i}\degree\,,\phi_{out,i}=\beta_{i}\degree$, Eq. (\ref{eq:siso channel}) can be rewritten as 
\begin{equation}
\label{eq:ulaop}
\begin{split}
h_{ii}&=\mathbf{w}^H\mathbf{a}_{C}(\phi_{out,i}\,,\phi_{in,i})=\sum_{m=0}^{M-1}w_{m}e^{-jkd(\cos\alpha_{i}+\cos\beta_{i})m },\\
\end{split}
\end{equation}
 where $k=\frac{2\pi}{\lambda}$ is the wave number, $d$ is the distance between each element and $\lambda$ is the carrier wavelength. The path-loss here is assumed to be a constant value. Thus, with unit power constraint on each IRS element, the weight on an IRS can then be expressed as
 \begin{equation}
w_{m}=e^{j\theta_{m}}\,,\theta_{m}\in(0,2\pi]\,,m=1,\dots,M\;\;.    
\end{equation}
As we can find, a necessary condition for $|h_{ii}|=M$ is that the weights need to guarantee each term in the summation in phase by writing the channel gain as 
\begin{equation}
\begin{split}
h_{ii}&=\mathbf{w}^H\mathbf{a}_{C}(\phi_{out,i}\,,\phi_{in,i})=\sum_{m=0}^{M-1}e^{-jkd(\cos\alpha_{i}+\cos\beta_{i}+\zeta_m)m },\\
\end{split}
\end{equation}
 where $\zeta_m$ is an arbitrary term comes from $\angle[w_m]$, the phase design on each element of IRS, we can observe the maximal value of $|h_{ii}|=M$ is guaranteed as long as 
 \begin{equation}\label{eq:optimaltx}
    kd(cos\alpha_{i}+cos\beta_{i}+\zeta_m)=2\pi n_1\,,n_1\in Z.
 \end{equation}
Denote $\Delta r=\frac{d}{\lambda}$, which is the normalized spacing between each element. We can compute the weight value on $m$-th element such as 
\begin{equation}
    \zeta_m=-\cos\alpha_{i}-\cos\beta_{i}+\frac{K}{\Delta r},
\end{equation}
to equalize the phase shifts. This is essentially the same to use MRC algorithm to calculate weights vector. Actual phase of weights can be obtained by $\theta_m=-\zeta_{m}kdm$. Then, after applying the result of MRC, since the weights have been determined, we can analyze other terms in the i-th column of matrix in equation (\ref{eq:General channel}) and write them as
\begin{equation}
\begin{split}
h_{ji}&=\mathbf{w}^H\mathbf{a}_{C}(\phi_{out,j}\,,\phi_{in,i})\\
&=\sum_{m=0}^{M-1}e^{-j2\pi\Delta r(\cos\beta_{j}-\cos\beta_{i}+\frac{K}{\Delta r})m }\;,\\
\end{split}
\end{equation}
where $\Delta r=\frac{d}{\lambda}$. Denote $f_{cc}=(\cos\beta_{j}-\cos\beta_{i}+\frac{K}{\Delta r})$ and $L=M\Delta r$ which are the variable in angular domain and normalized length of IRS. Therefore, $h_{ji}$ can be generalized as the beampattern and thus becomes a function of $f_{cc}$ 
\begin{equation}
\begin{split}
h_{ji}(f_{cc})&=\mathbf{w}^H\mathbf{a}_{C}(\phi_{out,j}\,,\phi_{in,i})\\
&=\sum_{m=0}^{M-1}e^{-j2\pi\Delta r(\cos\beta_{j}-\cos\beta_{i}+\frac{K}{\Delta r})m }\\
&=e^{-j\Delta{r}f_{cc}(M-1)}\frac{\sin(\pi f_{cc} L)}{\sin(\pi f_{cc} \frac{L}{M})}\;.
\end{split}
\end{equation}
 We can simply verify that $h_{ji}$ is a periodic function of $f_{cc}$ and the period is  $\frac{1}{\Delta r}$. If the period of $h_{ji}(f_{cc})$ is within the visible angular range which is $f_{cc}\in[-2,2]$ in this case, there can be $M-1$ other pairs of transceivers communicating at the same time. These pairs can use the same frequency of carrier since they are orthogonal in angular domain, which is shown in Fig.~\ref{fig:OSDM}. The nullifying point of $h_{ji}$ is also in the period of $\frac{1}{\Delta r}$, separated by $\frac{1}{L}$. Therefore, we can determine other \texttt{Rx}'s position $\beta_j$ such that there is no interference from the i-th \texttt{Tx} where the position can be calculated by
 \begin{equation}
\beta_{j}=\cos ^{-1}\left(\frac{j}{L}-\zeta_m-\cos \alpha_{i} \pm \frac{1}{\Delta r}\right)\;.
\end{equation}
These also means if other \texttt{Rxs} are standing in the same position as the nullifying position of \texttt{Tx}$_{i}$, there will be no interference from \texttt{Tx}$_{i}$, so other terms in the i-th column of channel matrix can be nullified.  In addition, since the weights have been calculated as $\zeta_m$ is set by first pair, given the position of \texttt{Rx}$_{j}$, we can calculate the optimal position of \texttt{Tx}$_{j}$ correspondingly leveraging the Eq. (\ref{eq:optimaltx}) which is 
\begin{equation}
\alpha_{j}=\cos ^{-1}\left(\frac{j}{L}-\zeta_m-\cos \beta_{i} \pm \frac{1}{\Delta r}\right)\;.
\end{equation}
\section{}\label{proof:ULA}
 To make the proof easy to follow, we assume $M=4$ and $N=2$, where $M$ is the number of elements on IRS and $N$ is the number of transceiver pairs. However, it is worth noting that this conclusion can be extended to arbitrary numbers of $N$ and $M$.  Following the definition in the manuscript, we have 
         \begin{equation} 
\mathbf{A}_{in}=\mathbf{A}=[\mathbf{a}(\phi_{in,1}),\mathbf{a}(\phi_{in,2})]=\begin{bmatrix}
  a_{11}  & a_{21}  \\
   a_{12} & a_{22}  \\
   a_{13} & a_{23} \\
   a_{14} & a_{24}  \\
\end{bmatrix},
\end{equation}
which is the steering matrix of incident direction toward IRS and $\mathbf{a}(\phi_{in,i}), i=1,2$ is the steering vector of incident direction on the IRS. Note that, this is also the channel from the \texttt{Txs} to the IRS. Similarly, we define the steering matrix of exit directions, which also is the channel from the IRS to the \texttt{Rxs}, as
        \begin{equation} 
\mathbf{A}_{out}=[\mathbf{a}(\phi_{out,1}),\mathbf{a}(\phi_{out,2})]=\mathbf{B}=\begin{bmatrix}
   b_{11} & b_{12} & b_{13} & b_{14}\\
   b_{21} & b_{22} & b_{23} & b_{24}\\
\end{bmatrix}^T ,
\end{equation}
where we change the notations of $\mathbf{A}_{out}$ by $\mathbf{B}$ for easy understanding.
The weight matrix of IRS is defined as $\mathbf{W}$, which is 
\begin{equation} 
\mathbf{W}=\begin{bmatrix}
   w_1^* & 0 &0 &0  \\
   0 & w_2^* &0 &0  \\
   0 &0 & w_3^* &0 \\
   0 &0 &0 & w_4^*  \\
\end{bmatrix}.
\end{equation}
By ignoring the noise term, we can write the received signal vector as 
\begin{multline}\label{eq:m2m}
 \hat{\mathbf{y}}_{r}=\mathbf{B}^T\mathbf{W}\mathbf{A}\mathbf{s}=\begin{bmatrix}
   \hat{y}_1 \\
   \hat{y}_2  \\
\end{bmatrix}=
\\ \begin{bmatrix}
   b_{11} & b_{12} & b_{13} & b_{14}\\
   b_{21} & b_{22} & b_{23} & b_{24}\\
\end{bmatrix}\!\!
\begin{bmatrix}
   w_1^* & 0 &0 &0  \\
   0 & w_2^* &0 &0  \\
   0 &0 & w_3^* &0 \\
   0 &0 &0 & w_4^*  \\
\end{bmatrix}\!\!
\begin{bmatrix}
  a_{11}  & a_{21}  \\
   a_{12} & a_{22}  \\
   a_{13} & a_{23} \\
   a_{14} & a_{24}  \\
\end{bmatrix}\!\!
\begin{bmatrix}
   s_1 \\
   s_2  \\
\end{bmatrix}
\;.    
\end{multline}
where the vector $\mathbf{s}=[s_1 \,\,s_2]^T$ is the vector of transmitted signal from \texttt{Tx}$_{1}$ and \texttt{Tx}$_2$. 
  \begin{figure}
  \centering
  \includegraphics[width=80mm,height=40mm]{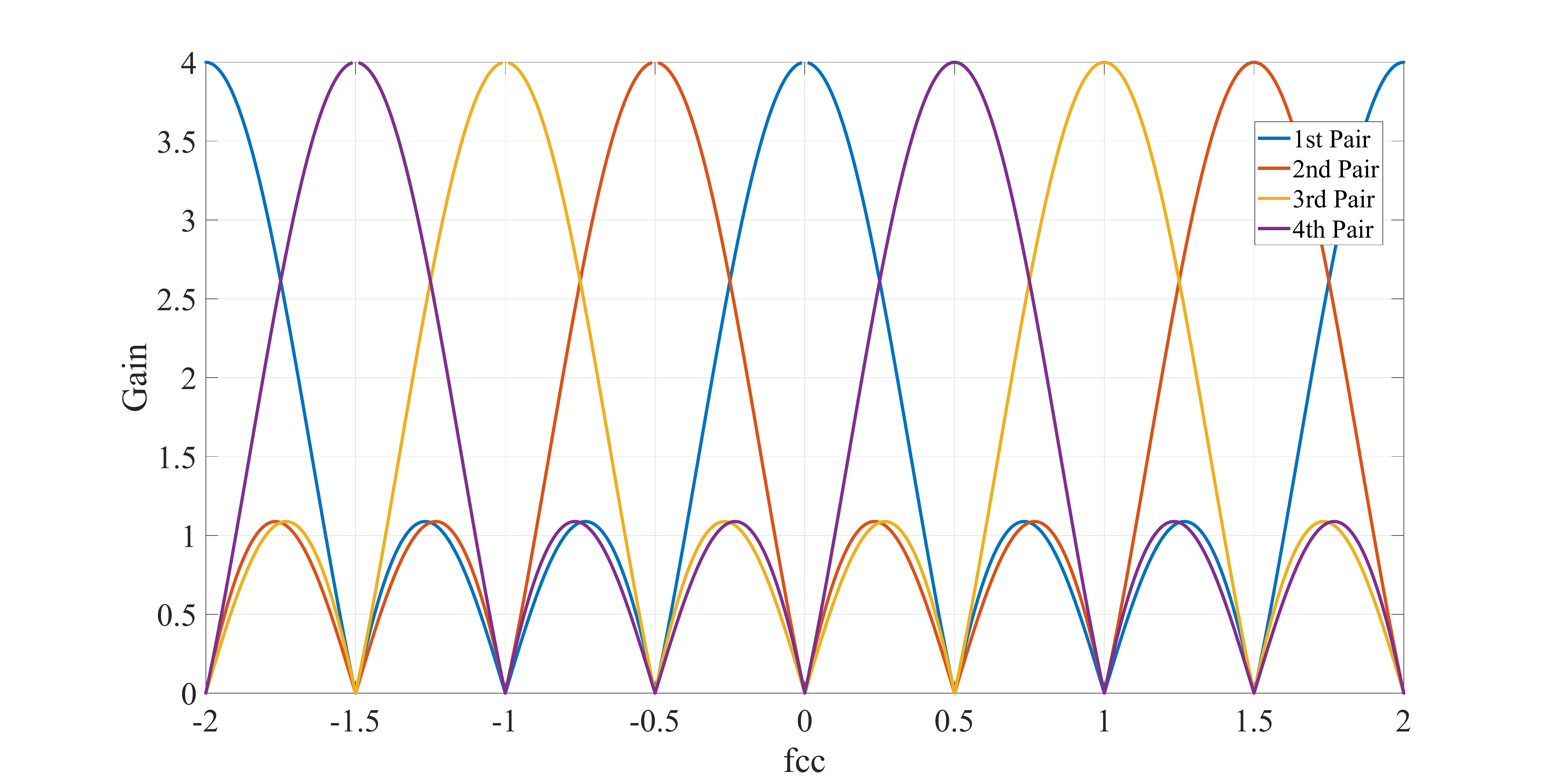}
  \caption{Optimal Spatial Multiplexing of $h_{1j}$,$h_{2j}$,$h_{3j}$ and $h_{4j}$,  M=4,d=$\frac{\lambda}{2}$, L=$2$.}
  \label{fig:OSDM}
\end{figure} 
 Next, some terms can be rewritten into a more regular form in order to have a channel expression which is similar to a traditional MIMO model. Therefore, by factorizing the term above in equation (\ref{eq:m2m}), we can have
\begin{equation}
    \begin{bmatrix}\label{big_mat}
   \hat{y}_1 \\
   \hat{y}_2  \\ 
\end{bmatrix}=\begin{bmatrix}
      [w_1^* & w_2^*& w_3^*& w_4^*]\begin{bmatrix}
   \mathbf{A}_{C,1}\\
\end{bmatrix}\begin{bmatrix}
  s_1\\
  s_2 \\
\end{bmatrix}\\
   [w_1^* & w_2^*& w_3^*& w_4^*]\begin{bmatrix}
\mathbf{A}_{C,2}
\end{bmatrix}\begin{bmatrix}
  s_1\\
  s_2 \\
\end{bmatrix}\\
\end{bmatrix},
\end{equation}
where
\begin{multline}
    \mathbf{A}_{C,1}=\begin{bmatrix}
\mathbf{a}_{C}(\phi_{out,1}\,,\phi_{in,1})    & \mathbf{a}_{C}(\phi_{out,1}\,,\phi_{in,2})
\end{bmatrix}=\\
\begin{bmatrix}
   b_{11}a_{11} & b_{11}a_{21}\\
   b_{12}a_{12} & b_{12}a_{22} \\
   b_{13}a_{13} & b_{13}a_{23}\\
   b_{14}a_{14} & b_{14}a_{24} \\
\end{bmatrix},
\end{multline}
and
\begin{multline}
\mathbf{A}_{C,2}=\begin{bmatrix}
\mathbf{a}_{C}(\phi_{out,2}\,,\phi_{in,1})    & \mathbf{a}_{C}(\phi_{out,2}\,,\phi_{in,2})
\end{bmatrix}=\\
\begin{bmatrix}
   b_{21}a_{11} & b_{21}a_{21}\\
   b_{22}a_{12} & b_{22}a_{22} \\
   b_{23}a_{13} & b_{23}a_{23}\\
   b_{24}a_{14} & b_{24}a_{24} \\
\end{bmatrix},
\end{multline}
         
        where $\mathbf{a}_{C}(\phi_{out,1}\,,\phi_{in,1}) =\mathbf{a}(\phi_{in,1})\odot \mathbf{a}(\phi_{out,1})$, $\mathbf{a}_{C}(\phi_{out,1}\,,\phi_{in,1})=\mathbf{a}(\phi_{in,2})\odot \mathbf{a}(\phi_{out,1})$. Therefore we can get the i-th user received signal as in \cite{Liu2019}
        
        \begin{equation}\label{eq:endentry}
    \hat{y}_{r,i}= \mathbf{w}^H\mathbf{A}_{C,i}\mathbf{s} + n_{i}\,, i=1, 2, ..., N\;\;.
\end{equation}
where the $n_i$ is the additive noise term at the each \texttt{Rx} and $\mathbf{w}$ is obtained by taking all the diagonal terms in $\mathbf{W}$, which is 
\begin{equation}
     \mathbf{w}=[w_1 \,\,\, w_2\,\,\, w_3\,\,\, w_4]^T.
\end{equation}
Note that that in equation (9), for different \texttt{Rxs}, their received signal is obtained along different steering matrix $\mathbf{A}_{C,i}$ but processed by the same weight vector $\mathbf{w}$. Due to $\mathbf{A}_{C,1}$ and $\mathbf{A}_{C,2}$ shares the same incident matrix, we can combine them further and move the difference on the two different steering matrices to the weight vector. Thus, through deviding $\mathbf{A}_{C,2}$ by $\mathbf{A}_{C,1}$ element-wisely, we can have matrix $\mathbf{C}$ which can be regarded as a factor of Hadamard product such that 

\begin{equation}
    \mathbf{A}_{C,1}\odot \mathbf{C}=\mathbf{A}_{C,2}, \text{and}  \,\mathbf{C}=\begin{bmatrix}
   \frac{b_{21}}{b_{11}} & \frac{b_{21}}{b_{11}}\\
  \frac{b_{22}}{b_{12}} & \frac{b_{22}}{b_{12}}\\
  \frac{b_{23}}{b_{13}} & \frac{b_{23}}{b_{13}}\\
  \frac{b_{24}}{b_{14}} & \frac{b_{24}}{b_{14}}\\
\end{bmatrix}.
\end{equation}
Actually, the term $\frac{b_{21}}{b_{11}}=e^{-jkd(cos\phi_{out,2}-\cos\phi_{out,1})0}$ for ULA case is complex constant where $k$ and $d$ are wave number and distance between elements respectively. Then, we can have $\frac{b_{2m}}{b_{1m}}=e^{-jkd(cos\phi_{out,2}-\cos\phi_{out,1})(m-1)}, m=1,2,...,M$. Although the terms' equivalence in the same column like $\frac{b_{21}}{b_{11}}=\frac{b_{22}}{b_{12}}=...=\frac{b_{24}}{b_{14}}$ can be achieved with the increasing of the iterative power term $(m-1)$ which means the steering matrices are same, we can assume that $d$ is small enough so that the overall complex term can not repeat in the period of itself and we can have $\frac{b_{21}}{b_{11}}\neq \frac{b_{22}}{b_{12}}\neq...\neq \frac{b_{24}}{b_{14}}$ given the directions of angle $\phi_{out,1} \neq \phi_{out,2}$. Next, we note that the columns of matrix $\mathbf{C}$ are same, then we can rewrite equation (\ref{big_mat}) as

\begin{equation}
        \begin{bmatrix}
   \hat{y}_1 \\
   \hat{y}_2  \\ 
\end{bmatrix}=\begin{bmatrix}\label{eq:c_mat}
       \mathbf{w}^H       \mathbf{A}_{C,1} 
  \mathbf{s}
\\
  \mathbf{w}^H       \mathbf{A}_{C,1} \odot \mathbf{C} \mathbf{s}\\
\end{bmatrix}=\begin{bmatrix}
       \mathbf{w}^H       \mathbf{A}_{C,1} 
  \mathbf{s}
\\
   \mathbf{w_C}^H\mathbf{A}_{C,1}\mathbf{s}\\
\end{bmatrix},
\end{equation}
\begin{equation}
\mathbf{w_C}=[w_{c1}\,\,\, w_{c2}\,\,\,...\,\,\,w_{c4}]=[w_1\frac{b_{21}}{b_{11}} \,\,\,w_2\frac{b_{22}}{b_{12}}\,\,\,...\,\,\,w_2\frac{b_{24}}{b_{14}}]^T.
\end{equation}
$\mathbf{w_C}$ is the equivalent vector for the second \texttt{Rx} $\hat{y}_2$ and we can know that it has a mapping relationship to $\mathbf{w}$, which is the unique characteristic in the IRS's model. Therefore, by combining the common term in equation (\ref{eq:c_mat}), we have
        \begin{multline}\label{eq:cc_mat}
    \hat{\mathbf{y}}_{r}=\begin{bmatrix}
       \mathbf{w}^H        
\\
   \mathbf{w_C}^H\\
\end{bmatrix}\begin{bmatrix}
   \mathbf{A}_{C,1}\mathbf{s}
\end{bmatrix}=
\\ \begin{bmatrix}
w_{1}^*      & w_{2}^* & w_{3}^* & w_{4}^*
\\
  w_{c1}^*      & w_{c2}^*& w_{c3}^*& w_{c4}^*\\
\end{bmatrix}
\begin{bmatrix}
   b_{11}a_{11} & b_{11}a_{21}\\
   b_{12}a_{12} & b_{12}a_{22} \\
   b_{13}a_{13} & b_{13}a_{23}\\
   b_{14}a_{14} & b_{14}a_{24} \\
\end{bmatrix}
\begin{bmatrix}
   s_1\\s_2
\end{bmatrix},
        \end{multline}
and by multiplying weight matrix with steering matrix, we have
\vspace{-1mm}
    \begin{equation}
           \hat{\mathbf{y}}_{r} = \begin{bmatrix}\label{eq:finalmat}
     \mathbf{w}^H\mathbf{a}_{C}(\phi_{out,1}\,,\phi_{in,1}) & \mathbf{w}^H\mathbf{a}_{C}(\phi_{out,1}\,,\phi_{in,2})\\
     \mathbf{w_C}^H\mathbf{a}_{C}(\phi_{out,1}\,,\phi_{in,1})& \mathbf{w_C}^H\mathbf{a}_{C}(\phi_{out,1}\,,\phi_{in,2})\\
     
\end{bmatrix}\begin{bmatrix}
   s_1\\s_2
\end{bmatrix}.
        \end{equation}
        
    
     To suppress the interference, we need to diagnolize the matrix in equation (\ref{eq:finalmat}). Namely, the weights vector $\mathbf{w}$ should satisfy  
\begin{equation}\label{eq:conflict2}
\begin{cases}
\mathbf{w}^H\mathbf{a}_{C}(\phi_{out,1}\,,\phi_{in,1})= \delta_{1}\\
\mathbf{w}^H\mathbf{a}_{C}(\phi_{out,1}\,,\phi_{in,2})=
0\\
\mathbf{w_{C}}^H\mathbf{a}_{C}(\phi_{out,1}\,,\phi_{in,1})=0\\
\mathbf{w_{C}}^H\mathbf{a}_{C}(\phi_{out,1}\,,\phi_{in,2})=\delta_{2}\\
\end{cases}\;,
\end{equation}
where $\delta_{1}$ and $\delta_{2}$ are non-zero values. Since $\mathbf{w_C}$  can be replaced by $\mathbf{w}$, we can present equation (\ref{eq:conflict2}) by using matrix as

\begin{equation}
  \begin{bmatrix}
  b_{11}a_{11} & b_{12}a_{12} & b_{13}a_{13} & b_{14}a_{14} \\
  b_{11}a_{21} & b_{12}a_{22} & b_{13}a_{23} & b_{14}a_{24} \\
  b_{21}a_{11} & b_{22}a_{12} & b_{23}a_{13} & b_{24}a_{14} \\
  b_{21}a_{21} & b_{22}a_{22} & b_{23}a_{23} & b_{24}a_{24} \\
    \end{bmatrix}\begin{bmatrix}
    w_1^* \\
    w_2^* \\
    w_3^* \\
    w_4^* \\
    \end{bmatrix}=\begin{bmatrix}
    \delta_1 \\
    0 \\
    0 \\
    \delta_2 \\
    \end{bmatrix}.
\end{equation}
As the matrix on the left-hand side is full rank which is assured by the assumption above, $4$ linear equations with $4$ unknowns can be solved with a non-zero solution. Moreover, by increasing the element number such that $M>>N^2$, the solution space will be further enlarged. Thus, there must be multiple non-zero solutions to achieve the diagonalization of the matrix in the equation (\ref{eq:finalmat}). In this case, the weights $\mathbf{w}$ and $\mathbf{w_C}$ can be nearly orthogonal to each other. As a result, the equivalence between traditional MIMO and IRS is established, and the interference can be suppressed among multiple transceiver pairs.

\bibliographystyle{ieeetr}
\bibliography{citation}

\begin{thebibliography}{10}

\bibitem{myconfer2}
Y.~Liu, L.~Zhang, P.~V. Klaine, and M.~A. Imran, ``{Optimal Multi-user
  Transmission Based on A Single Intelligent Reflecting Surface},'' in {\em
  2021 IEEE 4th International Conference on Electronic Information and
  Communication Technology (ICEICT)}, pp.~1--4, 2021.

\bibitem{2017shafi5g}
M.~Shafi, A.~F. Molisch, P.~J. Smith, T.~Haustein, P.~Zhu, P.~De~Silva,
  F.~Tufvesson, A.~Benjebbour, and G.~Wunder, ``{5G: A Tutorial Overview of
  Standards, Trials, Challenges, Deployment, and Practice},'' {\em IEEE Journal
  on Selected Areas in Communications}, vol.~35, no.~6, pp.~1201--1221, 2017.

\bibitem{gong2020survey}
S.~Gong, X.~Lu, D.~T. Hoang, D.~Niyato, L.~Shu, D.~I. Kim, and Y.-C. Liang,
  ``{Toward Smart Wireless Communications via Intelligent Reflecting Surfaces:
  A Contemporary Survey},'' {\em IEEE Communications Surveys Tutorials},
  vol.~22, no.~4, pp.~2283--2314, 2020.

\bibitem{yuan2021reconfigurable}
X.~Yuan, Y.-J.~A. Zhang, Y.~Shi, W.~Yan, and H.~Liu,
  ``Reconfigurable-intelligent-surface empowered wireless communications:
  Challenges and opportunities,'' {\em IEEE Wireless Communications}, vol.~28,
  no.~2, pp.~136--143, 2021.

\bibitem{tang2021pathloss}
W.~Tang, M.~Z. Chen, X.~Chen, J.~Y. Dai, Y.~Han, M.~Di~Renzo, Y.~Zeng, S.~Jin,
  Q.~Cheng, and T.~J. Cui, ``{Wireless Communications With Reconfigurable
  Intelligent Surface: Path Loss Modeling and Experimental Measurement},'' {\em
  IEEE Transactions on Wireless Communications}, vol.~20, no.~1, pp.~421--439,
  2021.

\bibitem{yang2016}
H.~Yang, X.~Cao, F.~Yang, J.~Gao, S.~Xu, M.~Li, X.~Chen, Y.~Zhao, Y.~Zheng, and
  S.~Li, ``{A {P}rogrammable {M}etasurface {W}ith {D}ynamic {P}olarization,
  {S}cattering and {F}ocusing {C}ontrol},'' {\em Scientific Reports}, vol.~6,
  no.~October, pp.~1--11, 2016.

\bibitem{Zhao2020}
H.~Zhao, Y.~Shuang, M.~Wei, T.~J. Cui, P.~del Hougne, and L.~Li,
  ``{Metasurface-assisted Massive Backscatter Wireless Communication with
  Commodity Wi-Fi Signals},'' {\em Nature Communications}, vol.~11, no.~1,
  pp.~1--10, 2020.

\bibitem{dai2019Relization}
J.~Y. {Dai}, W.~{Tang}, L.~X. {Yang}, X.~{Li}, M.~Z. {Chen}, J.~C. {Ke},
  Q.~{Cheng}, S.~{Jin}, and T.~J. {Cui}, ``{Realization of Multi-Modulation
  Schemes for Wireless Communication by Time-Domain Digital Coding
  Metasurface},'' {\em IEEE Transactions on Antennas and Propagation}, vol.~68,
  no.~3, pp.~1618--1627, 2020.

\bibitem{zhangnature}
L.~Zhang, X.~Q. Chen, S.~Liu, Q.~Zhang, J.~Zhao, J.~Y. Dai, G.~D. Bai, X.~Wan,
  Q.~Cheng, G.~Castaldi, {\em et~al.}, ``Space-time-coding digital
  metasurfaces,'' {\em Nature communications}, vol.~9, no.~1, pp.~1--11, 2018.

\bibitem{huang2018ee}
C.~Huang, G.~C. Alexandropoulos, A.~Zappone, M.~Debbah, and C.~Yuen, ``{Energy
  Efficient Multi-User MISO Communication Using Low Resolution Large
  Intelligent Surfaces},'' in {\em 2018 IEEE Globecom Workshops (GC Wkshps)},
  pp.~1--6, 2018.

\bibitem{zihan2020}
M.~H. Mazaheri, A.~Chen, and O.~Abari, ``{Millimeter Wave Backscatter: Toward
  Batteryless Wireless Networking at Gigabit Speeds},'' {\em HotNets 2020 -
  Proceedings of the 19th ACM Workshop on Hot Topics in Networks},
  pp.~139--145, 2020.

\bibitem{Liu2019}
Y.~{Liu}, L.~{Zhang}, B.~{Yang}, W.~{Guo}, and M.~A. {Imran}, ``{Programmable
  Wireless Channel for Multi-User {MIMO} Transmission Using Meta-Surface},'' in
  {\em 2019 IEEE Global Communications Conference (GLOBECOM)}, pp.~1--6, 2019.

\bibitem{Jiao2021}
L.~Jiao, P.~Wang, A.~Alipour-Fanid, H.~Zeng, and K.~Zeng, ``{Enabling Efficient
  Blockage-Aware Handover in RIS-Assisted mmWave Cellular Networks},'' {\em
  IEEE Transactions on Wireless Communications}, vol.~PP, no.~c, p.~1, 2021.

\bibitem{Wu2020}
Q.~Wu and R.~Zhang, ``{Towards {S}mart and {R}econfigurable {E}nvironment:
  {I}ntelligent {R}eflecting {S}urface {A}ided {W}ireless {N}etwork},'' {\em
  IEEE Communications Magazine}, vol.~58, no.~1, pp.~106--112, 2020.

\bibitem{Wu2019}
Q.~Wu and R.~Zhang, ``{Intelligent Reflecting Surface Enhanced Wireless Network
  via Joint Active and Passive Beamforming},'' in {\em IEEE Transactions on
  Wireless Communications}, vol.~18, pp.~5394--5409, 2019.

\bibitem{xie2022gao}
X.~Xie, C.~He, H.~Luan, Y.~Dong, K.~Yang, F.~Gao, and Z.~J. Wang, ``{A Joint
  Optimization Framework for IRS-assisted Energy Self-sustainable IoT
  Networks},'' {\em IEEE Internet of Things Journal}, pp.~1--1, 2022.

\bibitem{Zhou2020}
G.~Zhou, C.~Pan, H.~Ren, K.~Wang, and A.~Nallanathan, ``{Intelligent Reflecting
  Surface Aided Multigroup Multicast MISO Communication Systems},'' {\em IEEE
  Transactions on Signal Processing}, vol.~68, no.~Xx, pp.~3236--3251, 2020.

\bibitem{Kishk2021}
M.~A. Kishk and M.~S. Alouini, ``{Exploiting Randomly Located Blockages for
  Large-Scale Deployment of Intelligent Surfaces},'' {\em IEEE Journal on
  Selected Areas in Communications}, vol.~39, no.~4, pp.~1043--1056, 2021.

\bibitem{Alexandropoulos2020}
G.~C. Alexandropoulos, S.~Samarakoon, M.~Bennis, and M.~Debbah, ``{Phase
  Configuration Learning in Wireless Networks with Multiple Reconfigurable
  Intelligent Surfaces},'' in {\em 2020 IEEE Globecom Workshops (GC Wkshps},
  pp.~1--6, 2020.

\bibitem{Sun2020}
S.~Sun, M.~Fu, Y.~Shi, and Y.~Zhou, ``{Towards Reconfigurable Intelligent
  Surfaces Powered Green Wireless Networks},'' in {\em 2020 IEEE Wireless
  Communications and Networking Conference (WCNC)}, pp.~1--6, 2020.

\bibitem{mei2021mush}
W.~Mei and R.~Zhang, ``{Performance Analysis and User Association Optimization
  for Wireless Network Aided by Multiple Intelligent Reflecting Surfaces},''
  {\em IEEE Transactions on Communications}, vol.~69, no.~9, pp.~6296--6312,
  2021.

\bibitem{zijian2021}
Z.~Zhang and L.~Dai, ``{A Joint Precoding Framework for Wideband Reconfigurable
  Intelligent Surface-Aided Cell-Free Network},'' {\em IEEE Transactions on
  Signal Processing}, vol.~69, pp.~4085--4101, 2021.

\bibitem{shaocheng2020}
S.~Huang, Y.~Ye, M.~Xiao, H.~V. Poor, and M.~Skoglund, ``{Decentralized
  Beamforming Design for Intelligent Reflecting Surface-Enhanced Cell-Free
  Networks},'' {\em IEEE Wireless Communications Letters}, vol.~10, no.~3,
  pp.~673--677, 2021.

\bibitem{Yu2020}
X.~Yu, D.~Xu, Y.~Sun, D.~W.~K. Ng, and R.~Schober, ``{Robust and Secure
  Wireless Communications via Intelligent Reflecting Surfaces},'' {\em IEEE
  Journal on Selected Areas in Communications}, vol.~38, no.~11,
  pp.~2637--2652, 2020.

\bibitem{conway1994measurement}
G.~Conway, L.~Schott, and A.~Hirose, ``{Measurement of Surface Reflection
  Coefficients via Multiple Reflection of Microwaves},'' {\em Review of
  scientific instruments}, vol.~65, no.~9, pp.~2920--2928, 1994.

\bibitem{Maltsev2009}
A.~Maltsev, R.~Maslennikov, A.~Sevastyanov, A.~Khoryaev, and A.~Lomayev,
  ``{Experimental Investigations of 60 GHz WLAN Systems in Office
  Environment},'' {\em IEEE Journal on Selected Areas in Communications},
  vol.~27, no.~8, pp.~1488--1499, 2009.

\bibitem{tam1995raytrace}
W.~Tam and V.~Tran, ``{Propagation Modelling for Indoor Wireless
  Communication},'' {\em Electronics \& Communication Engineering Journal},
  vol.~7, no.~5, pp.~221--228, 1995.

\bibitem{Mei2021}
W.~Mei and R.~Zhang, ``{Cooperative {B}eam {R}outing {f}or {M}ulti-IRS {A}ided
  {C}ommunication},'' {\em IEEE Wireless Communications Letters}, vol.~10,
  no.~2, pp.~426--430, 2021.

\bibitem{mei2021intelligent}
W.~Mei, B.~Zheng, C.~You, and R.~Zhang, ``{Intelligent Reflecting Surface Aided
  Wireless Networks: From Single-reflection to Multi-reflection Design and
  Optimization},'' {\em arXiv preprint arXiv:2109.13641}, 2021.

\bibitem{Griesser1989}
T.~Griesser, C.~A. Balanis, and K.~Liu, ``{RCS Analysis and Reduction for Lossy
  Dihedral Corner Reflectors},'' {\em Proceedings of the IEEE}, vol.~77, no.~5,
  pp.~806--814, 1989.

\bibitem{modi2019bounce}
A.~Y. Modi, M.~A. Alyahya, C.~A. Balanis, and C.~R. Birtcher,
  ``{Metasurface-Based Method for Broadband RCS Reduction of Dihedral Corner
  Reflectors With Multiple Bounces},'' {\em IEEE Transactions on Antennas and
  Propagation}, vol.~68, no.~3, pp.~1436--1447, 2020.

\bibitem{hua2019mr}
H.~Yan, H.-C. Yin, S.~Li, and L.-S. Li, ``{3-D Rotation Representation of
  Multiple Reflections and Parametric Model for Bistatic Scattering From
  Arbitrary Multiplate Structure},'' {\em IEEE Transactions on Antennas and
  Propagation}, vol.~67, no.~7, pp.~4777--4791, 2019.

\bibitem{Cao2020}
Y.~Cao, T.~Lv, and W.~Ni, ``{Intelligent Reflecting Surface Aided Multi-user
  mmWave Communications For Coverage Enhancement},'' {\em IEEE International
  Symposium on Personal, Indoor and Mobile Radio Communications, PIMRC},
  vol.~2020-August, 2020.

\bibitem{Bjornson2020}
E.~Bjornson and L.~Sanguinetti, ``{Rayleigh {F}ading {M}odeling {a}nd {C}hannel
  {H}ardening {f}or {R}econfigurable {I}ntelligent {S}urfaces},'' {\em IEEE
  Wireless Communications Letters}, pp.~1--6, 2020.

\bibitem{Perovic2019}
N.~S. Perović, M.~D. Renzo, and M.~F. Flanagan, ``{Channel Capacity
  Optimization Using Reconfigurable Intelligent Surfaces in Indoor mmWave
  Environments},'' in {\em ICC 2020 - 2020 IEEE International Conference on
  Communications (ICC)}, pp.~1--7, 2020.

\bibitem{bai2015los}
T.~Bai and R.~W. Heath, ``{Coverage and Rate Analysis for Millimeter-Wave
  Cellular Networks},'' {\em IEEE Transactions on Wireless Communications},
  vol.~14, no.~2, pp.~1100--1114, 2015.

\bibitem{liu2022multi}
Y.~Liu, L.~Zhang, and M.~A. Imran, ``{Multi-user Beamforming and Transmission
  Based on Intelligent Reflecting Surface},'' {\em IEEE Transactions on
  Wireless Communications}, pp.~1--1, 2022.

\bibitem{Ozdogan2020}
O.~Ozdogan, E.~Bjornson, and E.~G. Larsson, ``{Intelligent {R}eflecting
  {S}urfaces: {P}hysics, {P}ropagation, and {P}athloss {M}odeling},'' {\em IEEE
  Wireless Communications Letters}, vol.~9, no.~5, pp.~581--585, 2020.

\bibitem{ord1970generation}
R.~Ord-Smith, ``{Generation of Permutation Sequences: Part 1},'' {\em The
  Computer Journal}, vol.~13, no.~2, pp.~152--155, 1970.

\bibitem{tse2005fundamentals}
D.~Tse and P.~Viswanath, {\em Fundamentals of {W}ireless {C}ommunication}.
\newblock Cambridge university press, 2005.

\bibitem{Han2021}
H.~{Han}, Y.~{Liu}, and L.~{Zhang}, ``{On Half-Power Beamwidth of Intelligent
  Reflecting Surface},'' {\em IEEE Communications Letters}, pp.~1--1, 2020.

\bibitem{emil2020nearf}
E.~Björnson and L.~Sanguinetti, ``Power {S}caling {L}aws and {N}ear-field
  {B}ehaviors of {M}assive {MIMO} and {I}ntelligent {R}eflecting {S}urfaces,''
  {\em IEEE Open Journal of the Communications Society}, vol.~1,
  pp.~1306--1324, 2020.

\bibitem{Bjornson2021coup}
E.~Bj{\"{o}}rnson, ``{Optimizing A Binary Intelligent Reflecting Surface for
  OFDM Communications under Mutual Coupling},'' vol.~1, no.~3, 2021.

\bibitem{Jain2005}
K.~Jain, J.~Padhye, V.~N. Padmanabhan, and L.~Qiu, ``{Impact {o}f
  {I}nterference {o}n {M}ulti-hop {W}ireless {N}etwork {P}erformance},'' {\em
  Wireless Networks}, vol.~11, no.~4, pp.~471--487, 2005.

\bibitem{dor1997graph}
D.~Dor and M.~Tarsi, ``Graph {D}ecomposition {I}s {NP}-{C}omplete: {A}
  {C}omplete {P}roof {o}f {H}olyer's {C}onjecture,'' {\em SIAM Journal on
  Computing}, vol.~26, no.~4, pp.~1166--1187, 1997.

\bibitem{tarsi1983decomposition}
M.~Tarsi, ``Decomposition {o}f {A} {C}omplete {M}ultigraph {i}nto {S}imple
  {P}aths: {N}onbalanced {H}andcuffed {D}esigns,'' {\em Journal of
  Combinatorial Theory, Series A}, vol.~34, no.~1, pp.~60--70, 1983.

\end{thebibliography}

\end{document}